\documentclass[twocolumn]{article}

\usepackage{times}
\usepackage{soul}
\usepackage{url}
\usepackage[hidelinks]{hyperref}
\usepackage[utf8]{inputenc}
\usepackage[small]{caption}
\usepackage{graphicx}
\usepackage{amsmath}
\usepackage{amsthm}
\usepackage{booktabs}
\usepackage{algorithm}
\usepackage{algorithmic}
\usepackage{halloweenmath}
\usepackage[capitalize]{cleveref}
\usepackage{amssymb, mathtools}
\usepackage{etoolbox}
\usepackage[dvipsnames]{xcolor}
\usepackage{comment}
\usepackage{enumerate}
\usepackage{enumitem}
\usepackage[normalem]{ulem}
\usepackage{xspace}
\usepackage{bbm}
\usepackage{fdsymbol}
\usepackage{yfonts}
\usepackage{upgreek}
\usepackage{utfsym}
\usepackage{lipsum}
\usepackage{graphicx}% http://ctan.org/pkg/graphicx
\usepackage{tikz}
\usepackage{tikzsymbols}

\usepackage{footmisc}

\usepackage{geometry}
\theoremstyle{plain}
\newtheorem{theorem}{Theorem}
\newtheorem{corollary}[theorem]{Corollary}
\newtheorem{lemma}[theorem]{Lemma}

\newtheorem{fact}[theorem]{Fact}

\newtheorem{observation}[theorem]{Observation}

\newcommand{\defqedsymbol}{\ensuremath{\lozenge}}
\theoremstyle{definition}
\newtheorem{definition}[theorem]{Definition}
\AtBeginEnvironment{definition}{\pushQED{\null\hfill\defqedsymbol}}
\AtEndEnvironment{definition}{\popQED}%\null\hfill\ensuremath{\lozenge}}

\theoremstyle{remark}

\definecolor{specificationblue}{RGB}{60,70,100}
\newcommand{\specificationqedsymbol}{\ensuremath{\lozenge}}
\newtheoremstyle{specificationstyle}
  {\topsep}   % space above
  {\topsep}   % space below
  {\normalfont} % body font
  {}          % indent
  {\bfseries} % head font
  {\textcolor{black}{.}}         % punctuation after theorem head
  { }         % space after theorem head
  {\textcolor{black}{\thmname{#1}\thmnumber{ #2}}%
   \thmnote{ {\normalfont(#3)}}}

\theoremstyle{specificationstyle}
\newtheorem{specification}[theorem]{Specification}
\AtBeginEnvironment{specification}{\pushQED{\null\hfill\specificationqedsymbol}}
\AtEndEnvironment{specification}{\popQED}%\null\hfill\ensuremath{\lozenge}}
\newcommand{\db}{{\mathcal D}}

\newcommand{\dddb}{\db}

\newcommand{\dbmf}{{\mathfrak D}}
\newcommand{\dbmbb}{{\mathfrak D}_0}

 \newcommand{\ontologyy}{\mathcal{O}}

\newcommand{\DLCL }{DL-Lite$_{\text{\tiny \sc core}}$\xspace}
\newcommand{\DLCLP}{DL-Lite$_{\text{\tiny \sc core}}^+$\xspace}

\newcommand{\cq}{\text{CQ}}

\newcommand{\cqneg}{\text{CQ}^{s\neg}}

\newcommand{\ontol}{{\mathcal O}}

\newcommand{\thue}{{\mathcal G}}

\newcommand{\szymeq}{\simeq_\Pi}%{\xspace{\simeq_\Pi}\xspace}

\newcommand{\VVV}{\mathfrak V}

\newcommand{\strz}[3]{#1 \stackrel{#2}{\longrightarrow} #3 }

\newcommand{\jest}[1]{#1^{\text{\scalebox{0.8}{$\exists$}}}}

\newcommand{\abox}{\text{ABox}}
\newcommand{\tbox}{\text{TBox}}

\newcommand{\freeze}[1]{\text{\small \sc freeze}(#1)}
\newcommand{\freezed}{\text{\small \sc freeze}}
\newcommand{\normalfreezed}{\text{\sc freeze}}

\newcommand{\dom}[1]{\operatorname{Dom}(#1)}
\newcommand{\type}[2]{type(#1,#2)}

\newcommand{\pair}[1]{\langle #1 \rangle}
\newcommand{\set}[1]{\{ #1 \}}

\newcommand{\Thue}{{\mathbb G}}

\newcommand{\III}{\mathbb I}

\newcommand{\MMM}{\mathbb S}
\newcommand{\iii}{\Bbbk}

\newcommand{\love}{{\Wintertree}}
\newcommand{\dblove}{\db_\love}

\newcommand{\orange}[1]{\textcolor{orange}{#1}}

\newcommand{\iffi}{\textit{iff} }

\definecolor{goldl}{RGB}{171,126,1}

\newcommand{\pphi}{{\mathbb Q}}
\newcommand{\newaboxneg}{{\mathbb A}}

\newcommand{\ontolneg}{{\mathfrak K}} 

\newcommand{\aabox}{{\mathfrak A}} 
\newcommand{\ttbox}{{\mathfrak T}} 

\newcommand{\psineg}{{\mathfrak Q}}

\newcommand{\unaryA}{{\mathbf A}}

\newcommand{\size}{\mathsf{size}}
\newcommand{\problem}{\Thue}
\newcommand{\start}{\mathsf{L}}
\newcommand{\finish}{\mathsf{R}}
\newcommand{\blank}{\mathsf{B}}
\newcommand{\es}{\mathsf{Z}}
\newcommand{\esp}{\mathsf{Z}'}
\newcommand{\szym}{\sim_\Pi}
\newcommand{\bear}{\mathbb{Q}}
\newcommand{\beari}{\mathbb{Q}_i}

\newcommand{\LL}{\mathsf{L}}
\newcommand{\RR}{\mathsf{R}}
\newcommand{\diam}{{\textcolor{gray}{\vardiamondsuit}}}
\newcommand{\idiam}{{\vphantom{a}\!\diam}}

\newcommand{\cpred}{\mathsf{C}}
\newcommand{\dpred}{\mathsf{D}}
\newcommand{\epred}{\mathsf{E}}
\newcommand{\fpred}{\mathsf{F}}

\newcommand{\xpred}{\mathsf{X}}
\newcommand{\ypred}{\mathsf{Y}}
\newcommand{\zpred}{\mathsf{Z}}
\newcommand{\ypredp}{\mathsf{Y}'}
\newcommand{\ppred}{\mathsf{P}}
\newcommand{\spred}{\mathsf{S}}

\newcommand{\pillow}{Pillow}

\newcommand{\szymeqrep}[1]{[#1]_{\szymeq}}

\newcommand{\korz}{Root}
\newcommand{\tooclose}{2Close}
\newcommand{\vlove}{\love(\db)}

\title{Someone slept in my bed!
On the entailment problem for
conjunctive queries with safe negation
over
DL-Lite$_{core}$ knowledge bases}
\author{
  Jerzy Marcinkowski\\
  University of Wrocław\\
  \texttt{jma@cs.uni.wroc.pl}
  \and
  Piotr Ostropolski-Nalewaja\\
  University of Wrocław\\
  \texttt{postropolski@cs.uni.wroc.pl}
}
\begin{document}

\maketitle

\begin{abstract}
We solve a long standing open problem, showing  that the query answering for 
conjunctive queries with safe negation, over
\DLCL knowledge bases, is undecidable.
%We actually prove this negative result even for a certain logic weaker than \DLCL.
\end{abstract}

\section{Preliminaries. Part 1.}\label{sec:preliminaries}

%\orange{\sout{In this Section we very briefly present some standard notions, to make sure that the authors and the Readers are on the same page.}}

We assume that the Reader is familiar with the standard notions regarding first-order logic (FOL) and its semantics.
In particular, we use the terms {\em signature}, and {\em atomic formula} (or {\em atom}) in the standard sense. 
{\em Fact} is an atomic formula without variables (that is, only {\em constants} are allowed).
We assume that there exists some fixed infinite set $\mathcal C$ of constants.

\subsection{Structures and Relations.}
The signatures
we consider only comprise unary and binary {predicates}, so we think of
{\em relational structures} (for short just {\em structures})
over such signatures as (coloured and labelled) graphs and, in consequence, we call
elements of their domains {\em vertices} and sometimes we call the binary atoms {\em edges}.

We find it notationally convenient to think that { structures} are just sets of facts.
This means that the structures we consider never have isolated vertices and that
only elements of $\mathcal C$ can serve as verticies. This can be assumed,
for the purpose of this paper, without loss of generality.

For a structure $\db$ by $\dom{\db}$ we denote the set of its vertices (that is the set of
constants the facts of $\db$ mention). Structure $\db_1$ is called a {\em substructure} of $\db_2$ if
$\db_1\subseteq\db_2$ (recall they are sets of facts).
If $\db$ is a structure and $C\subseteq \dom{\db}$ then the set:
$$\db[C] = \{E(c,d)\in \db: \;  c,d\in C\} \cup \{A(c)\in \db:  c\in C\}  $$
%
%$\orange{\set{A(\vec{a}) \in \db \mid \vec{a} \in C \lor \vec{a} \in C^2}}$
is called an {\em induced} by $C$ {\em substructure} of $\db$.

%, denoted as $\db[C]$.

%$$\{E(c,d): \; E(c,d)\in \db \;\wedge\; c,d\in C\} \cup \{A(c): \;A(c)\in \db \;\wedge\; c\in C\}  $$

\subsection{Queries.}\label{sec:sugar1}
A \emph{conjunctive query} (CQ) is a FOL formula which is a conjunction of atoms preceded by existential quantifiers.
A \emph{conjunctive query with safe negation} ($\cqneg$) is a  CQ where negated atoms are allowed. {It is however assumed, that every variable that appears in such query, must appear in
some positive (that is non-negated) atom}. A CQ (or a $\cqneg$) is {\em Boolean} if all its variables are bound by existential quantifiers.
We use the notations $\cq$ (or $\cqneg$) also  to denote the respective classes of queries.

For a Boolean $\xi\in \cq \cup\cqneg$ and a structure $\db$ by $\db\models \xi$
we mean that $\xi$ is true in 
$\db$, in the usual FOL sense.
Recall that if $\xi\in \cq$ then $\db\models \xi$ is equivalent to
the existence of a mapping from (the variables of) $\xi$ to $\db$ which is a homomorphism. If $\xi\in \cqneg$ then,
for $\db\models \xi$ to hold, this homomorphism must satisfy some additional natural constraint, namely
the image of a negated atom in $\xi$ must not be an atom in $\db$.
We will call such mappings {\em satisfying mappings}.

\subsection{The logic \texorpdfstring{\DLCL}{DL-lite-core} and query entailment.}
The DL-Lite family of logics, and the logic \DLCL in particular,  were introduced in
\cite{2007-introduces-DL-lite-core_tractable-CQ-answering-in-DLlite}.
Since our work is technical in nature, we will not discuss DL-Lite family in the full generality here, focusing instead on presenting the minimal set of definitions allowing us to formulate our results.

%Their definitions use the DL jargon, relying on the notions of a {\em role} and a {\em concept}. 
A signature $\Sigma$ comprising unary and binary {predicates} is assumed. A {\em role} is  any binary relation {over} $\Sigma$ or its inverse. A {\em concept} is any unary relation {over} $\Sigma$ or any unary relation that can be defined as a projection of a role: for a role $R$, the set $\{x : \exists y\, R(x,y) \}$ is a concept, denoted as $\exists R$.

A DL-Lite knowledge base $\ontol$ is a pair $\pair{\tbox, \abox }$,
where $\abox$ is a structure (i.e. a set of facts) and
 $\tbox$ is the ``domain knowledge'' -- a set of
expressions involving  roles and concepts. \DLCL is the least expressive of the logics of  family
and  only  two forms of such expressions ale allowed in a \DLCL \tbox, namely:
$$
(\usym{2660})\;\; B \sqsubseteq B' \;\;\;\;\;\;\;\;\;\;\;\;\;\;\;\;
(\usym{2665})\;\; B \sqsubseteq \neg B'
$$
\noindent
 where $B$ and $B'$ are concepts. The semantics of the expressions is natural, with
 $\sqsubseteq$ meaning subset, and $\neg$ meaning complementation. Notice that
 (\usym{2665}) means that two concepts are disjoint.

 For a structure $\db$ and a knowledge base $\ontol=\pair{\tbox, \abox}$ we say that $\db$ is a model of $\ontol$, denoted as $\db \models \ontol $ when
 $\abox\subseteq \db$ and all the inclusions from $\tbox$ are satisfied in $\db$ in the usual FOL sense.

 For a Boolean conjunctive query $\xi$ (possibly with safe negation), and a knowledge base $\ontol$, by $\ontol \models \xi$ (``$\ontol$ {\em entails} $\xi$'') we mean that $\xi$ is true in
every structure $\db$ such that $\db\models \ontol$.

%%%%%%%%%%%%%%%%%%%%%%%%%%%%%%%%%%%%%%%%%%%%%%%%%%%%%%%%%%%%%%%%
\section{Our contribution}
%%%%%%%%%%%%%%%%%%%%%%%%%%%%%%%%%%%%%%%%%%%%%%%%%%%%%%%%%%%%%%%%

Our main technical result is:

\begin{theorem}\label{main}
The following problem is undecidable:

\noindent
{Given a \DLCL knowledge base $\ontologyy$ and a Boolean conjunctive query with safe negation $\phi$. Does  $\ontologyy\models \phi$?
}
\end{theorem}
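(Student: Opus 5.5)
We reduce from an undecidable problem about rewriting systems. The macros already in the preamble (\Thue, \start, \finish, \blank, \szym) point to a Thue-like (semi-Thue / string rewriting) system $\Pi$ with end markers, and to the question whether two words are equivalent modulo $\Pi$. Our plan is to reduce the complement of this word problem to non-entailment. Given $\Pi$, we build a \DLCL knowledge base $\ontologyy=\pair{\tbox,\abox}$ and a Boolean $\cqneg$ $\phi$ such that $\ontologyy\not\models\phi$ iff $\Pi$ has some (infinite, or unbounded) computation/derivation structure. Equivalently, a model of $\ontologyy$ avoiding $\phi$ exists iff the instance is a ``yes'' instance.

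The central difficulty is that \DLCL is extremely weak. It can only say that every element of some concept has an outgoing or incoming edge of some role ($B\sqsubseteq\exists R$), and it can state disjointness. It cannot force any grid or any identification of elements. All the structure must therefore come from the query. Because $\phi$ is \emph{entailed} when it holds in every model, a model falsifying $\phi$ must avoid every pattern $\phi$ describes. So the negated atoms let $\phi$ act as a \emph{constraint}: $\phi$ will be a disjunction-free but carefully designed CQ with negations, meaning ``there is a defect''. A defect is a positive pattern whose closing edge is missing, i.e.\ $\exists \bar x\,(\text{pattern}(\bar x)\wedge\neg E(x_i,x_j))$. Models with no defect are then forced to contain the closing edges, which emulates tuple-generating dependencies with \emph{equal-free} but variable-reusing heads. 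This is exactly what is needed to build grids and Thue-equivalence classes.

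Since $\phi$ is a single CQ rather than a disjunction, we first show how to simulate a disjunction (union) of several $\cqneg$s by one query. The trick is to put a special gadget (the ``bed''/\love) in the \abox. The $\tbox$ forces every model to contain a copy of each disjunct's ``trap'' pattern, so that each conjunct of a big combined query can be matched harmlessly inside the \abox gadget except for the one conjunct that detects the real defect. This is in the spirit of the title, where someone slept in my bed. We expect this simulation to be routine once the gadget is set up, using $\tbox$ axioms $A\sqsubseteq\exists R$, $\exists R^-\sqsubseteq A'$ and disjointness axioms to keep the gadget rigid.

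With unions available, we encode the rewriting system as follows. \abox contains a start word between markers \start and \finish. $\tbox$ axioms generate, for every vertex, successor edges labelled by letters. The negation-based constraints then force rewriting steps $w\to w'$ to be witnessed by edges linking equivalent vertices, so that defect-free models are exactly structures closed under $\szym$. A final disjunct detects that the forbidden word (e.g.\ containing \blank or \es) is reached. Soundness means that if the words are not equivalent, we can construct the canonical ``free'' model (a quotient of a tree-like chase) with no defect. Completeness means that any defect-free model contains a homomorphic image of every derivation, hence a match of the final disjunct.

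The main obstacle is the completeness/soundness of the constraint simulation. Without equality, defect-freeness only adds edges and never merges elements, so we must ensure the infinite chase-like model neither accidentally satisfies a negated pattern nor fails to propagate. We would first try designing $\Pi$ so that equivalence is tracked purely by edges (a ``$\szymeq$'' relation closed by the constraints), and then check the canonical model carefully.
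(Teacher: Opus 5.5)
\paragraph{Main gap: the union-to-single-query step.} Your encoding of the rewriting system is essentially the paper's, which treats it as a variant of earlier work. It consists of ``defect'' queries (a positive pattern whose closing edge is missing) and a plain CQ checking that two paths from the root meet. Three small corrections: you need no start word in the ABox, and no merging of elements. The canonical countermodel for a negative instance is the structure on $\MMM^*/\szymeq$, and in a defect-free part reachable from the root the $\start\es$- and $\finish\es$-paths automatically meet. Also, $\ontologyy\not\models\phi$ should correspond to the two words not being $\szymeq$-equivalent, not to an infinite computation.

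What this construction gives, however, is undecidability for a \emph{union} of $\cqneg$s. The step you call routine, collapsing that union into a single $\cqneg$, is the actual content of the theorem, and the mechanism you describe fails. If the ABox gadget offers a harmless match for every conjunct, the big conjunction holds in every model and is trivially entailed. Which disjunct is ``the real one'' varies from model to model, so you need a gadget that can host any all-but-one of the conjuncts but never all of them at once.

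The paper gets this by a pigeonhole argument:
\begin{itemize}
\item There are $|\III|$ disjoint beds $\newaboxneg_i$ but $|\III|+1$ conjuncts; the extra one is $\pillow(x)$, which fits into every bed.
\item Negated atoms $\neg\tooclose_j$ link all pairs of distinguished variables.
\item Each pillow satisfies $\tooclose_i(a_i,a_i)$, and bed and query shapes force any match of $\jest{\beari}$ into a bed to put its head on the pillow. So two conjuncts in one bed clash.
\item TBox axioms $\exists\xpred^{-1}\sqsubseteq\neg\exists\tooclose_i$ and $\unaryA\sqsubseteq\neg\exists\tooclose_i$ ensure that a conjunct matched in the part reachable from the root never clashes with anything.
\item Distinct pillows are never $\tooclose$ to each other.
\end{itemize}

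\paragraph{Second gap: ``rigidity'' of the gadget.} \DLCL cannot forbid additional atoms that leave every vertex type unchanged. So no choice of axioms makes a gadget rigid, and an adversarial model could make true the negated atoms of the intended matches inside the beds, or the $\neg\tooclose$ atoms. (It is also the ABox, not the TBox, that must supply these copies; the TBox cannot force patterns containing negations.)

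The paper therefore freezes the types of all bed vertices. It adds fresh unary predicates $\mathrm{is}_c$ with $\mathrm{is}_c\sqsubseteq\neg B$ for every concept $B$ false at $c$. It then designs beds and queries so that every needed negative fact is a type-level fact. For example, for $\pair{\cpred\dpred,\epred\fpred}$ the vertex matched to $z_i$ has no incoming $\fpred$-edge at all. This is where the side condition $\dpred\neq\fpred$ and the two end markers $\es,\esp$ of the Thue system come from. Without the pigeonhole gadget and this type-freezing discipline, your plan does not get beyond unions of $\cqneg$s.
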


%%%%%%%%%%%%%%%%%%%%%%%%%%%%%%%%%%%%%%%%%%%%%%%%%%%%%%%%%%%
%%%%%%%%%%%%%%%%%%%%%%%%%%%%%%%%%%%%%%%%%%%%%%%%%%%%%%%%%%

\section{Preliminaries. Part 2.}

In this section, we present some non-standard notations, which will first let us equivalently restate Theorem \ref{main} and then will be very useful in Sections \ref{sec:implementation} and \ref{sec:dowody} where our main technical result is proven.

\subsection{CQs with one free variable.}

For a conjunctive query $\xi(x)$, possibly with  safe negation, with one free variable, we denote by $\jest{\xi}$ the Boolean conjunctive query $\exists x\,\xi(x)$.
 Variable $x$ will
 be called {\em the distinguished variable} in $\jest{\xi}$.

To see what this notation can be good for imagine two queries  $\xi_1(x_1)$ and  $\xi_2(x_2)$, each with one free variable and a structure $\db$.
Consider a new boolean conjunctive query:
$$ \xi \;\;\; = \; \;\; \exists x_1,x_2\;\;\; \neg H(x_1,x_2) \;\; \wedge \;\; \xi_1(x_1) \;\;  \wedge  \;\; \xi_2(x_2) $$
Now, to satisfy $\xi$ in $\db$, one needs to find
%Then (and this may seem to be an obvious observation, but will be very useful)
%in Section \ref{secmain}
%in order to satisfy query $\xi$, in some structure $\dddb$,
two satisfying mappings $h_1: \jest{\xi_1} \rightarrow \dddb $ and $h_2: \jest{\xi_2} \rightarrow \dddb$ % need to be found,
such that $H(h_1(x_1),h_2(x_2))\not\in \db$. Apart from the last constraint, which concerns the
 distinguished variables of the two queries,
the mappings are independent,
because even if (syntactically)  $\jest{\xi_1}$ and $\jest{\xi_2}$  shared some variables, the shared variables
are local in $\xi_1(x_1)$ and in  $\xi_2(x_2)$.
%
%In such context, we will call queries $\jest{\xi_1}$ and $\jest{\xi_2}$ {\em components} of $ \xi$.

\subsection{Types of vertices and the \texorpdfstring{$\normalfreezed$}{freeze} construct.}

For a structure $\db$ over the signature $\Sigma$ and for a vertex $c\in \dom{\db}$ we define
the (unary) \emph{type} $\type{c}{\db}$ of $c$ in $\db$ in the natural way:

\noindent
\begin{alignat*}{3}
 \pair{ \;\;\{A&\in\Sigma \mid &&\db\models A(c)\},\\
        \{R&\in\Sigma \mid \exists x\; &&\db\models R(c,x) \},\\
        \{R&\in\Sigma \mid \exists x\; &&\db\models R(x,c)\}    \;\;}    
\end{alignat*}

\noindent
In words, $\type{c}{\db}$ tells us which unary predicates are true in $c$,  which binary predicates lead to $c$, and which leave $c$.

All \DLCL is able to say is about types.  It is very easy to notice that:

\begin{observation}\label{obs:trivial}
Let $\tbox$ be a set of statements of  the form (\usym{2660}) and (\usym{2665}).
Suppose ${\db}$ and ${\db}'$ are two structures, such that  $\dom{\db}=\dom{\db}'$ and that for each
$c\in \dom{\db}$ it holds that $ \type{c}{\db} = \type{c}{\db'}$. Then $\db\models\tbox$ if and only if  $\db'\models\tbox$.
\end{observation}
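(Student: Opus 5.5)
It suffices to show that every single inclusion of the form (\usym{2660}) or (\usym{2665}) has the same truth value in $\db$ and in $\db'$. Since $\tbox$ is a conjunction of such statements, the equivalence $\db\models\tbox \iff \db'\models\tbox$ then follows. The idea is that the truth of such an inclusion is determined by which concepts hold at which vertices, and this information is read off from the types.

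The first step is to check that for every concept $B$ and every vertex $c\in\dom{\db}=\dom{\db'}$, whether $c\in B$ depends only on $\type{c}{\db}$. I would split into three cases.
\begin{itemize}
\item If $B$ is a unary predicate $A$, then $c\in B$ in $\db$ iff $A$ belongs to the first component of $\type{c}{\db}$.
\item If $B=\exists R$ for a binary predicate $R$, then $c\in B$ iff $R$ belongs to the second component, the set of predicates leaving $c$.
\item If $B=\exists R^-$ for an inverse role, then $c\in B$ iff there is $x$ with $R(x,c)$, i.e.\ iff $R$ is in the third component.
\end{itemize}
By the assumption $\type{c}{\db}=\type{c}{\db'}$, each concept has the same extension in $\db$ and in $\db'$. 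One small point needs attention: the inclusions quantify over the domain. Because structures are sets of facts with no isolated vertices, the domains coincide by hypothesis, so the universal quantifier in $\forall x\,(B(x)\to B'(x))$ ranges over the same set in both structures.

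The second step is then immediate. The inclusion $B\sqsubseteq B'$ holds iff the extension of $B$ is contained in that of $B'$. The inclusion $B\sqsubseteq\neg B'$ holds iff the two extensions are disjoint. Both conditions depend only on the extensions, which coincide in $\db$ and $\db'$.

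I expect no real obstacle here. The only step requiring care is the inverse-role case, where one must match the direction of the edge with the correct component of the type. As literally written, the paper's definition lists incoming predicates in the third component even though the prose reverses the order, but this does not matter for the argument.
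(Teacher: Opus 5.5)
Your proof is correct and is the reasoning the paper intends: the paper gives no proof at all (it calls the observation ``very easy to notice''), and the implicit justification is exactly that the extension of each concept $A$, $\exists R$, $\exists R^{-}$ over the common domain is read off from the types, so every inclusion and disjointness statement has the same truth value in both structures. Your remark that the prose reverses the order of the second and third type components relative to the formula is accurate, and as you say it does not affect the argument.
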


So, as an example, suppose:
$\db=\{E(a,b),E(b,c)\}$ and $\db'=\{E(a,b),E(b,c),E(a,c)\}$.
Then no \DLCL $\tbox$ can distinguish between the two.
This is a bit problematic from the point of view of proof of Theorem \ref{main}, because at some point in the proof we need to make sure that certain conjunctive query {\bf with safe negation} is
satisfied in every model. To this end we need to find a way to make sure that, in each model of the knowledge base we are going to construct,
there is always enough of {\em negativity}  to satisfy the negated atoms, that is there are enough atoms that are not true in the model.

But, as Observation \ref{obs:trivial} says we cannot prohibit the presence of atoms that do not change
the types of the vertices. So let us at least  prohibit the presence of atoms that {\bf do  change
the types of the vertices}. And this is what the $\freezed$ construct is about, which leads to the definition of a new logic,
that we call \DLCLP:

\begin{definition}[\DLCLP syntax] A \DLCLP knowledge base is a triple $\pair{\tbox, \abox, \freeze{\mathbb A}}$ such that:\\
\phantom{a} \textbullet\quad $\pair{\tbox, \abox}$ is a \DLCL knowledge base, and\\
\phantom{a} \textbullet\quad ${\mathbb A}\subseteq \abox$.
\end{definition}

\begin{definition}[\DLCLP semantics]
For a \DLCLP knowledge base $\ontol=\pair{\tbox, \abox, \freeze{\mathbb A}}$ and a structure $\db$ we say that $\db\models \ontol$ \iffi\!:\\
\phantom{a}\textbullet \quad $\db\models \pair{\tbox, \abox}$, and \\
\phantom{a}\textbullet \quad $\type{c}{\db}=\type{c}{\mathbb A}$ holds for all $c\in \dom{\mathbb A}$
\end{definition}

Now, proving the following lemma is quite straighforward:

\begin{lemma}\label{lem:freeze-redukcja}
For a \DLCLP knowledge base {$\ontol=\pair{\tbox, \abox, \freeze{\mathbb A}}$} and query $\psi\in \cqneg$ one can effectively construct a \DLCL knowledge base $\ontol'$ such that $$\ontol\models\psi \Leftrightarrow \ontol'\models\psi.$$
\end{lemma}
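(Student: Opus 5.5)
The plan is to simulate the $\freezed$ construct with fresh unary predicates and disjointness axioms of the form (\usym{2665}). The construction depends only on the types occurring in $\mathbb A$, and it is harmless for $\psi$ because $\psi$ will not mention the new predicates.

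\textbf{Construction of $\ontol'$.} Let $\Sigma$ be the signature of $\ontol$ and $\psi$. For every vertex $c\in\dom{\mathbb A}$ we introduce a fresh unary predicate $N_c\notin\Sigma$. (One could share a single predicate among vertices with equal types, but this is not needed.) Put $\abox'=\abox\cup\{N_c(c) : c\in\dom{\mathbb A}\}$. Write $\type{c}{\mathbb A}=\pair{U_c,O_c,I_c}$. Then $\tbox'$ is $\tbox$ together with the following inclusions, for every $c\in\dom{\mathbb A}$:
\begin{itemize}
\item $N_c\sqsubseteq\neg A$ for every unary $A\in\Sigma\setminus U_c$;
\item $N_c\sqsubseteq\neg\exists R$ for every binary $R\in\Sigma\setminus O_c$;
\item $N_c\sqsubseteq\neg\exists R^-$ for every binary $R\in\Sigma\setminus I_c$.
\end{itemize}
All of these are legal \DLCL inclusions of the form (\usym{2665}), so $\ontol'=\pair{\tbox',\abox'}$ is a \DLCL knowledge base, and it is clearly computable from $\ontol$.

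\textbf{The $\Leftarrow$ direction.} Assume $\ontol'\models\psi$ and let $\db\models\ontol$; we show $\db\models\psi$. Set $\db'=\db\cup\{N_c(c) : c\in\dom{\mathbb A}\}$. Then $\abox'\subseteq\db'$, and $\tbox$ still holds in $\db'$, since $\tbox$ mentions only $\Sigma$ and the $\Sigma$-part of every type is unchanged. For the new inclusions, the only element satisfying $N_c$ is $c$. By the $\freezed$ condition, $\type{c}{\db}=\type{c}{\mathbb A}$, so $c$ carries no unary predicate outside $U_c$ and no outgoing or incoming role outside $O_c$ or $I_c$. Hence $\db'\models\ontol'$, and therefore $\db'\models\psi$. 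Since $\psi$ does not mention any $N_c$, every satisfying mapping into $\db'$ is also a satisfying mapping into $\db$. Positive atoms are over $\Sigma$ and so lie in $\db$, and negated atoms are over $\Sigma$ and so are absent from $\db$ exactly when they are absent from $\db'$. Thus $\db\models\psi$.

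\textbf{The $\Rightarrow$ direction.} Assume $\ontol\models\psi$ and let $\db'\models\ontol'$. Take $\db$ to be the set of $\Sigma$-facts of $\db'$. First, $\abox\subseteq\db$, and $\db\models\tbox$ because $\tbox$ only talks about $\Sigma$-types, which agree in $\db$ and $\db'$. Second, for $c\in\dom{\mathbb A}$ we have $N_c(c)\in\db'$, so the new disjointness axioms give $\type{c}{\db}\subseteq\type{c}{\mathbb A}$ componentwise. The reverse inclusion follows from $\mathbb A\subseteq\abox\subseteq\db$. So $\db\models\ontol$, hence $\db\models\psi$, and by the same argument on satisfying mappings $\db'\models\psi$.

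\textbf{Expected obstacle.} The only delicate point is bookkeeping: we must make sure that dropping or adding the $N_c$ facts does not affect the truth of $\psi$ or of $\tbox$. This holds because both are over $\Sigma$ and the $N_c$ are fresh. Under the paper's convention that structures are sets of facts, dropping the $N_c$ facts creates no problem with isolated vertices, since every $c\in\dom{\mathbb A}$ already occurs in a $\Sigma$-fact of $\mathbb A$.
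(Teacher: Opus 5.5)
Your proposal is correct and is essentially the paper's construction. Both use a fresh unary marker for each constant of $\dom{\mathbb A}$, asserted in the ABox, plus disjointness axioms of the form $\mathrm{is}_c \sqsubseteq \neg B$ for every concept $B$ absent from $c$'s frozen type. You also spell out both directions, which the paper leaves as a sketch, and you read the forbidden concepts off $\type{c}{\mathbb A}$ rather than off $\abox$ as the sketch literally says. That is the reading matching the freeze semantics, and the two differ only when $\ontol$ has no models at all.
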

\begin{proof}\!\!(sketch)\,
    {We define $\ontol' =\pair{\abox', \tbox'}$ as follows. Let $\abox' = \abox \cup \set{\mathrm{is}_c(c) \mid c\in \dom{\mathbb{A}}}$, where $\mathrm{is}_c$ is a fresh unary predicate for a constant $c$; note that $\dom{\mathbb{A}}$ is finite. Define $\tbox'$ as $\tbox$ with expressions $\mathrm{is}_c \sqsubseteq \neg B$ for each concept $B$ that is not satisfied in $c$ in $\abox$.}
\end{proof}

In view of Lemma \ref{lem:freeze-redukcja} we equivalently restate Theorem \ref{main} as:

\begin{theorem}\label{mainp}
The following problem is undecidable:

\noindent
{Given a \DLCLP knowledge base $\ontologyy$ \; and a Boolean conjunctive query with safe negation $\phi$. Does  $\ontologyy\models \phi$?
}
\end{theorem}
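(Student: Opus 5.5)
We prove Theorem \ref{mainp} by a many-one reduction from an undecidable word problem. The macros already set up (\(\Thue\), \(\szym\), \(\szymeq\), \(\start\), \(\finish\), \(\blank\)) suggest a finitely presented semigroup or Thue system \(\Pi\). We reduce from the \emph{complement} of its word problem, or equivalently from the halting problem encoded as a Thue system. Given \(\Pi\) and two words \(w,w'\), we construct a \DLCLP knowledge base \(\ontol_\Pi\) and a query \(\phi_\Pi\in\cqneg\) such that \(\ontol_\Pi\models\phi_\Pi\) iff \(w\not\szymeq w'\) (or iff \(w\szymeq w'\); we fix whichever direction the construction yields). By Lemma \ref{lem:freeze-redukcja} this transfers to Theorem \ref{main}.

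The query is built as a disjunction-like conjunction. Since a CQ with safe negation cannot express disjunction directly, we use the trick from the preliminaries. The query takes the form \(\exists x_1,x_2\,\neg H(x_1,x_2)\wedge\xi_1(x_1)\wedge\xi_2(x_2)\), and the ABox contains frozen gadgets, the ``bed'' vertices. These gadgets, via \(\freezed\), guarantee that certain atoms are absent in every model. As a result, the query is satisfied ``for free'' through the frozen gadget unless the model has some defect. In this way \(\phi_\Pi\) expresses: ``the model violates some local consistency rule \(\vee\) the model reaches the target''.

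The intended countermodels are grid-like or chain-like structures that \DLCL forces to grow. An existential \(A\sqsubseteq\exists R\) together with \(\exists R^-\sqsubseteq B\) forces an infinite chain of labelled vertices, which represents words or configurations. The CQ-with-negation components then punish every model that fails to apply the rewriting rules of \(\Pi\) consistently: for each rule \(u\to v\), a pattern matching \(u\) that is not paired with \(v\) triggers the query. Each pattern occurrence is checked by a positive path, and the mismatch by a negated atom. Safety holds because every variable occurs on the positive path.

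The argument then has two directions. For soundness, we argue that if \(w\not\szymeq w'\), then a canonical model can be built from the equivalence classes \(\szymeqrep{\cdot}\), with vertices being the classes. This model satisfies \(\ontol_\Pi\), including the frozen types, and avoids every defect pattern, so the query fails. For completeness, we argue that any model avoiding all defect patterns contains a homomorphic image of a derivation, so if \(w\szymeq w'\) the target pattern is matched.

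The main obstacle is the point stressed by Observation \ref{obs:trivial}. Models may contain arbitrary extra edges that do not change types, and such edges can destroy the negated atoms on which the query relies. The first attempt is to route every negated atom \(\neg H(x_1,x_2)\) so that one endpoint is a frozen ABox constant whose type forbids outgoing or incoming \(H\) edges. Negativity is then guaranteed at the frozen ``bed'' vertices and nowhere else. The remaining work is to design the gadgets so that all needed comparisons factor through such frozen vertices. The completeness direction must also be checked carefully, since spurious edges could create unintended matches: each spurious edge must only help the query, never block it.
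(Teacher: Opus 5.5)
Your ingredients match the paper's: a Thue-type word problem, a countermodel on the $\szymeq$-classes, one $\cqneg$ per rule detecting a defect by a positive path plus one negated atom, and frozen ABox gadgets protecting negated atoms. The gap is that you never say how one $\cqneg$ realizes the disjunction ``some defect $\vee$ target reached'', and what you do say points the wrong way. The frozen gadgets make each component satisfiable in \emph{every} model, so the query is not ``satisfied for free unless there is a defect''. Likewise, in your soundness direction, ``the canonical model avoids every defect pattern, so the query fails'' is not an argument: every model contains the ABox gadgets, where the defect patterns are matched by design.

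In the paper the disjunction comes from counting. The query $\psineg$ conjoins $|\III|+1$ connected components: one $\pphi_i$ per rule of $\Pi$, one plain CQ for the target, and $Pillow(x)$. Every pair of heads is constrained by $\neg 2Close_j$. The ABox offers only $|\III|$ frozen beds $\newaboxneg_i$, whose pillow $a_i$ carries the positive self-loop $2Close_i(a_i,a_i)$. This needs a rigidity property (Specification~\ref{spec:5}(e),(f)): every match of any component in any bed puts its head on that bed's pillow. That is the paper's size argument, and it is why $\Pi$ has sides of length $1$ or $2$ with $\dpred\neq\fpred$. Hence two characters never share a bed. Since nothing matches in $\dbmbb$, the countermodel ($\dbmbb$ plus disjoint beds) fails $\psineg$ by pigeonhole, and in general $\psineg$ holds exactly when some $\pphi_i$ can be placed inside $\love(\db)$.

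Your plan to get all negativity ``at the frozen bed vertices and nowhere else'' also fails for the separation constraints. To block double occupancy the pillow must satisfy $2Close_i(a_i,a_i)$, so its frozen type already has incoming and outgoing $2Close_i$. Freezing therefore cannot rule out $2Close_i(t,a_i)$ for a vertex $t$ outside the beds, which is exactly where the extra component sleeps. The paper gets this negativity from the TBox instead. Every vertex reached by an $\MMM$-edge, and the root (marked $\unaryA$), is declared disjoint from $\exists 2Close_i$ and its inverse, so vertices of $\love(\db)$ are never too close to anything (Lemma~\ref{lem:no-hatred}). Freezing is used only to keep each component's negated atom true in its own bed in every model, and to make distinct pillows never too close (Lemma~\ref{jeden-hejt}).

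With these pieces the positive direction is as follows. If the instance is positive, either $\dblove$ is imperfect and some rule-query matches there (its negated atom still holds in $\db$, since $\dblove$ is an induced substructure), or it is perfect and the target CQ matches at $a$. That component leaves its bed, $Pillow(x)$ takes the vacated pillow, and all other components stay in their own beds. Your sentence ``the remaining work is to design the gadgets'' is where the actual proof lives.
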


And it is Theorem \ref{mainp} that we prove in Sections \ref{sec:high-level} and \ref{sec:implementation}.

%%%%%%%%%%%%%%%%%%%%%%%%%%%%%%%%%

\section{High level proof of Theorem \ref{mainp}}\label{sec:high-level}

In this section we present the high level part of our proof of Theorem \ref{mainp}. 
Some (or maybe even most)
of the objects needed in the proof will remain undefined in this section.
The list of such undefined notions includes:

\begin{enumerate}[label=(\roman*), itemsep=-0em, topsep=0.0em]
    \item a decision problem $\Thue$;
    \item two finite sets $\MMM$ of symbols and  $\III$ of names;
    \item a conjunctive query with safe negation $\pphi_i(x_i)$ with one free variable, for each ${i\in\III}$;
    %conjunctive query  with safe negation,
    
    \item a finite structure $\newaboxneg_i$, for each $i \in \III$.
\end{enumerate}

Instead of defining the above notions, in the current section we will just {\bf specify} their properties.\footnote{We use the {\em mathbb} font  for the objects that await definition.}
Then we will prove that, if they all were indeed defined in a
way satisfying the specifications,
then the theorem would follow.

Let us insist that what we do here is not an ``informal idea'',
which will be followed by a precise proof. What we are doing here
is totally rigorous, and --- we hope --- quite elegant. % while not too hard to read.
Then, in Sections \ref{sec:implementation} and \ref{sec:dowody} we show how to define the notions from the above list
in a way satisfying the specifications. {Definitions of the notions (ii)-(iv) will depend
on the particular instance $\thue$ of the problem $\Thue$ so, for example, instead of 
$\MMM$ we should write $\MMM_\thue$. For the sake of readability, we omit this subscript. This should not cause any confusion, since we will always consider one fixed $\thue$.} 
Unlike the proof in the current section, the construction, and argument,   in 
Sections \ref{sec:implementation} and \ref{sec:dowody}
 has a bit of a low-level programming flavour.

%%%%%%%%%%%%%%%%%%
\subsection{The source of undecidability}\label{subsec:source}

\begin{specification}\label{spec:of-problem}
Problem $\Thue$ will be undecidable and r.e.
%Each instance $\thue$ of 
%the problem
%$\Thue$ will be a \orange{pair consisting of a finite set $\MMM$ and a finite subset of $\MMM^* \times \MMM^*$.}
%finite  subset of $\MMM^* \times\; \MMM^*$ for some  finite set $\MMM$, which is also a part of the instance.
\end{specification}

\noindent
In order to prove Theorem \ref{mainp}  we are going to construct,
for
 each instance $\thue$ of the problem $\Thue$,
a \DLCLP
 knowledge base\footnote{We use the {\em mathfrak} font for the objects we construct {in the reduction}: the knowledge base, its elements, and the query.}
 $\ontolneg$ and a Boolean conjunctive query with safe negations $\psineg$ such that  the following 
 equivalence will holds:
\vspace{0.1mm}
\begin{center}
\hfill$\thue$ is a positive instance ~~~ \iffi  ~~~ $\ontolneg \models \psineg$ \hfill  $\bigpumpkin$
\end{center}

\noindent
From now on, {we consider an instance $\thue$ 
%--- and thus sets $\MMM$ and $\III$ --- 
to be fixed.}
%, which means that $\MMM$ is fixed too. 

%%%%%%%%%%%%%%%%%%%%
\subsection{Towards the knowledge base\texorpdfstring{ $\ontolneg$.}{.} Our \texorpdfstring{$\tbox$}{Tbox}. }

{We begin with the signature of the knowledge base $\ontolneg$:}
\begin{definition}
    The signature $\Sigma$ of  $\ontolneg$ comprises:
\begin{itemize}[itemsep=-0.2em, topsep=0.0em]
    \item three unary relations, $Root$, $\unaryA$, and $Pillow$,
    \item the set $\MMM$ (mentioned above) of binary relation symbols,
    \item a binary relation $2Close_i$ (``too close'')
    for each $i\in \III$.\qedhere %(where $\III$ is as in Specification \ref{spec:2}).
\end{itemize} \end{definition}

Now, recall that the $\ontolneg$ we are going to construct
is a  \DLCLP knowledge base, so it will be 
 a triple $\pair{\ttbox, \aabox, \freeze{\aabox_0}}$ for some $\tbox$ $\ttbox$, some  $\abox$ 
 $\aabox$, and some 
 $\aabox_0 \subseteq \aabox$. 

We are now ready to define our $\tbox$ $\ttbox$ :

\begin{definition}\label{def:tbox}
 $\ttbox$ contains the following statements:
\begin{itemize}
    \item [(i)]\!$\exists \xpred^{-1} \sqsubseteq \exists \ypred$  for each pair  $\xpred,\ypred\in\MMM$.
    \item [(ii)] \!$ Root \sqsubseteq \exists \xpred$ for all $ \xpred \in \MMM$.
    \item [(iii)]\!$ \exists \xpred^{-1} \sqsubseteq \neg \exists 2Close_i$ for all $\xpred\in \MMM$ and all $i\in \III$,
    \item [(iii)]\!$ \exists \xpred^{-1} \sqsubseteq \neg \exists 2Close_i^{-1}$ for all $\xpred\in \MMM$ and all $i\in \III$,
    \item [(iii)]\!$ \unaryA \sqsubseteq \neg \exists 2Close_i$ for all  $i\in \III$,
    \item [(iii)]\!$ \unaryA \sqsubseteq \neg \exists 2Close_i^{-1}$ for all  $i\in \III$.\qedhere
\end{itemize}
\end{definition}

\noindent
In human language this reads as:

\noindent
\begin{itemize}[itemsep=-0.2em, topsep=0.0em]
    \item [(h1)] A vertex reached by some edge $\xpred\in \MMM$ must produce an $\ypred$ for every $\ypred\in \MMM$;
    \item [(h2)] A vertex which is a $Root$ must produce an $\ypred$ for every $\ypred\in \MMM$;
    \item [(h3)] A vertex reached by any $\xpred \in \MMM$ (or one which is a $\unaryA$)  is never too close to any other vertex.
\end{itemize}

%%%%%%%%%%%%%%%%%%%%%%%%%%%%%%%%%%%%%%%%%%%%%%%%%%%%%%%%%%%%%%%%%%%
\subsection{Towards the \texorpdfstring{$\abox$}{ABox}. }

Now let us concentrate on the structures $\newaboxneg_i$, for  $i \in \III$.
We like to think of them as beds in which our little bears will sleep, with their heads duly resting on the $Pillows$. 
Their specification 
will come in two parts:

\begin{specification}[beds pt.~1]\label{spec:of-disjointness}
Structures $\newaboxneg_i, \newaboxneg_j$, for each $i,j \in \III$  will satisfy:
\begin{enumerate}
    \item if $i\neq j$ then  $\dom{\newaboxneg_i}\cap \dom{\newaboxneg_{j}}=\emptyset$;
    \item $\dom{\newaboxneg_i}\cap \dom{\newaboxneg}=\emptyset$;
      \item $a\not\in \dom{\newaboxneg_i}$\qedhere
\end{enumerate}
\end{specification}

\begin{definition}
Let:
\begin{align*}
    \aabox_0 = \textstyle\bigcup_{i\in \III} \newaboxneg_i \quad\;  \newaboxneg = \{Root(a), \unaryA(a)\} \quad\;  \aabox = \aabox_0 \cup \newaboxneg \;\tag*{\qedhere}
\end{align*}
%$$ \aabox_0 = \textstyle\bigcup_{i\in \III} \newaboxneg_i \quad  \newaboxneg = \{Root(a), \unaryA(a)\} \quad  \aabox = \aabox_0 \cup \newaboxneg \qedhere$$
%\begin{itemize}
%    \item [$\aabox_0$] = $\bigcup_{i\in \III} \newaboxneg_i$
%    \item [$\newaboxneg$] = $\{Root(a), \unaryA(a)\}$
%    \item [$\aabox$] = $\aabox_0 \cup \newaboxneg$\qedhere
%\end{itemize}
%$\aabox_0 =  \bigcup_{i\in \III} \newaboxneg_i  $;\\
%$\newaboxneg = \{Root(a), Ground(a)\} $;\\
%$\aabox = \aabox_0 \cup \newaboxneg $
\end{definition}

Notice, that our knowledge base is at this point defined: $$\ontolneg\;\;=\;\;\pair{ \ttbox, \aabox,  \;\freeze{\aabox_0}} $$

Now, suppose $\db\models \newaboxneg$ for some structure $\db$. We define $\love(\db)$ as the set of all the vertices of $\db$ reachable
from $a$ by a path of $\MMM$ symbols:

\begin{definition}
The set $\love(\db)$ is the inclusion-minimal set satisfying:
\begin{itemize}[topsep=0em,itemsep=0em]
    \item $a\in \love(\db)$;
    \item {if $u \in \love(\db)$ and $\db\models \xpred(u,v)$, then  $v\in\love(\db)$, for all} vertices $u,v$ {of $\db$} and $\xpred\in\MMM$.%, and if $u\in\love(\db)$, then  $v\in\love(\db)$.
\end{itemize}
\noindent
Then we denote as $\dblove$ the structure $\db[\love(\db)]$, that is the substructure of $\db$
induced by $\love(\db)$.
\end{definition}

One can easily get tempted to imagine $\dblove$
as an infinite tree, of out-degree equal to $|\MMM|$ and with root in $a$.
But reality is more complicated, and many strange things\footnote{As you soon will see, $\dblove$ is where undecidability comes from.} may happen here. In the extreme case, $\dblove$ could even
be just the vertex $a$ with all edges $\xpred$, which origin from $a$, looping back to it.

What we know for sure is that vertices of  $\love(\db)$ are never too close to anything:

\begin{lemma}\label{lem:no-hatred}
Suppose $\db\models \ontolneg$. If $\db\models 2Close_i(s,t)$  for some $i\in\III$, then
$s\not\in \love(\db)$ and $t\not\in \love(\db)$. \qed
\end{lemma}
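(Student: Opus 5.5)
The plan is to show that every vertex of $\love(\db)$ lies in the concept $\unaryA$ or in some concept $\exists \xpred^{-1}$ with $\xpred\in\MMM$. Once this is established, the disjointness axioms (iii) of $\ttbox$ from Definition~\ref{def:tbox} do the rest. Since $\db\models\ontolneg$, we have $\db\models\ttbox$ and $\aabox\subseteq\db$; in particular $\unaryA(a)\in\db$.

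First I will prove the auxiliary claim: for every $u\in\love(\db)$, either $u=a$ or $\db\models\xpred(w,u)$ for some $w$ and some $\xpred\in\MMM$. This follows from the inclusion-minimality of $\love(\db)$. The set $S$ consisting of $a$ together with all vertices having an incoming $\MMM$-edge contains $a$ and is closed under following $\MMM$-edges. Indeed, if $\db\models\xpred(u,v)$ with $\xpred\in\MMM$, then $v$ has an incoming $\MMM$-edge, regardless of where $u$ lies. Hence $S\cap\love(\db)$ also satisfies both defining clauses, so by minimality $\love(\db)\subseteq S$.

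Now suppose $\db\models 2Close_i(s,t)$. Then $s\in\exists 2Close_i$ and $t\in\exists 2Close_i^{-1}$. Consider first the case $s\in\love(\db)$:
\begin{itemize}
\item if $s=a$, then $\unaryA(a)\in\db$, and the axiom $\unaryA\sqsubseteq\neg\exists 2Close_i$ is violated;
\item otherwise $s\in\exists\xpred^{-1}$ for some $\xpred\in\MMM$, contradicting $\exists\xpred^{-1}\sqsubseteq\neg\exists 2Close_i$.
\end{itemize}
The case $t\in\love(\db)$ is symmetric, using the two axioms with $2Close_i^{-1}$.

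There is no real obstacle here. The only point requiring care is the minimality argument. One might be tempted to say that every element of $\love(\db)$ is reached by an edge, but $a$ itself need not be, and this is exactly why the $\unaryA$ axioms are present. Note also that the $\freezed$ part of $\ontolneg$ is not needed for this lemma.
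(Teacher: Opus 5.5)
Your proof is correct and follows the same argument the paper intends; the paper states the lemma with \qed and no written proof, treating it as immediate from the (iii) axioms of $\ttbox$. Every vertex of $\love(\db)$ is either $a$, which satisfies $\unaryA$, or has an incoming $\MMM$-edge, and the (iii) axioms forbid both kinds of vertex from appearing on either side of a $2Close_i$ atom. Your minimality argument and your remark that the $\freezed$ component is not needed are both right.
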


%%%%%%%%%%%%%%%%%%%%%%%%%%%%%%%%%%%%%%%%%%%%%%%%%%%%%%%%%%%%%%%%%%%%%%%%%%%%%%%
\subsection{First step towards the query\texorpdfstring{ $\psineg$.}{.} }

\noindent
Now we are ready for:

\begin{specification}\label{spec:sensors}\label{spec:4}
The set $\{\pphi_i(x_i)\}_{i\in\III}$ will satisfy:

\vspace{1mm}\noindent
(con) For each $i \in \III$ the query $\pphi_i$ is connected.

\vspace{1mm}\noindent
(pos) if $\thue$ is a positive instance of the problem $\Thue$, then for all  $\db$ such that  $\db\models \newaboxneg$ and $\db\models \ttbox $ there exists $i\in \III$ such that $\dblove \models \jest{\pphi_i}$;

\vspace{1mm}\noindent
(neg) if $\thue$ is a negative instance of the problem $\Thue$, then
 there exists a structure $\dbmbb$ such that $\dbmbb\models \newaboxneg$ and $\dbmbb\models \ttbox$  but
for each $i\in \III$
it holds that $\dbmbb\not\models \jest{\pphi_i}$. Additionally, $\dbmbb\not\models \exists x\, Pillow(x)$.
\end{specification}

The intuition is that the queries $\jest{\pphi_i}$, for ${i\in\III}$,  collectively probe the structure $\dblove$ 
in order to detect positivity of $\thue$ there.

The queries (and also the set $\III$) will be defined in Section \ref{sec:low-bears-beds}. But let us already reveal here that all of them but one will
be conjunctive queries with safe negation, and one will be a CQ.
Then, the idea is that the more $\dblove$ will look like an infinite tree,
the more the queries with safe negation will be likely to be satisfied in $\dblove$. And
the more $\dblove$ will
look like the aforementioned a self-looping $a$, the more the single CQ will be likely to be satisfied.

Notice that if (as in (pos) above) there is $\dblove \models \jest{\pphi_i}$ then also $\db \models \jest{\pphi_i}$,
via the same mapping. This is true, despite the fact that
queries $\jest{\pphi_i}$ may contain negated atoms,
because $\dblove$ is defined
as the {\bf induced} (by the set of vertices of $\love(\db)$) substructure of $\db$.

There will be also one more very simple CQ needed in our construction:
$\pphi(x) =  Pillow(x)$.

\subsection{Little bears and their beds.}\label{sec:little-bears-and-beds}

\begin{specification}[beds, part 2]\label{spec:5}
For all $i,j\in \III$:
\begin{enumerate}[label=(\alph*),itemsep=0em]
    \item  $\newaboxneg_i\models \ttbox $;
    \item  $a_i$ is a vertex of $\newaboxneg_i$ and $\newaboxneg_i\models 2Close_i(a_i,a_i)$;
    \item  there exists a mapping $g_i$ satisfying $\jest{\pphi_{i}}$ in  $\newaboxneg_i$;
     if $\db \models \ontolneg$ then the above mapping  $g_i$   also satisfies $\jest{\pphi_{i}}$ in $\db$; 
     \item $\newaboxneg_i \models\! Pillow(a_i)$
    \item  if $f$ is any mapping  satisfying $\jest{\pphi_{i}}$ in  $\newaboxneg_j$ then $f(x_i)=a_j$ (recall that $x_i$ is  the distinguished variable  of  $\jest{\pphi_{i}}$);
    \item  if $\newaboxneg_i\models Pillow(c)$ for some vertex $c$ then  $c=a_i$.\qedhere
\end{enumerate}
\end{specification}

Imagine that queries $\pphi_i$, for $i\in \III$ are little bears, and  the structures $\newaboxneg_i$ 
as their designated beds. And imagine the query $\pphi$ as Goldilocks, who does not really care where she sleeps,
except that she needs a pillow.

The (c) above says that, whatever happens (that is, whatever model of $\ttbox$ we consider), each of the bears 
can sleep in his dedicated  bed. The (d) says that (whatever happens) also Goldilocks can sleep in any of the beds. 

The (e) and (f) say that there is only one right way, for each of our characters, to sleep in a bed: their head (the distinguished variable) must be mapped onto the pillow (the $a_i$ vertex of the $\newaboxneg_i$).

A curious Reader could wonder here, how can we possibly define the structures in such a way that (c) is satisfied? However we defined $\newaboxneg_i$, the $\db$ will be some unpredictable 
superstructure of $\newaboxneg_i$, and how can we be sure that the negated atoms of $\pphi_i$
are still satisfied in $\db$ (which means, that the relevant positive atoms remain false)? The answer is, that this requires all the objects here including the queries $\pphi_i$ and
the structures $\newaboxneg_i$ to play as a one  well-coordinated orchestra, with the $\freezed$ operator  playing the first violin in it.

The next lemma will be soon very useful, and it also gives us an opportunity to see  the $\freeze$ operator in action:

\begin{lemma}\label{jeden-hejt}
If $\db\models \ontolneg$, if $i\neq j$ for some $i,j\in \III$ and if $k\in\III$, then
$\db\not\models 2Close_k(a_i,a_{j})$.
\end{lemma}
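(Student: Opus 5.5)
We argue by contradiction, using only the $\freezed$ semantics of \DLCLP. Suppose $\db\models\ontolneg$, $i\neq j$, $k\in\III$, and $\db\models 2Close_k(a_i,a_j)$.

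The key point is that $a_i$ belongs to $\dom{\aabox_0}$. By Specification \ref{spec:5}(b), $a_i$ is a vertex of $\newaboxneg_i$, and $\newaboxneg_i\subseteq\aabox_0$. The definition of $\db\models\ontolneg$ therefore forces $\type{a_i}{\db}=\type{a_i}{\aabox_0}$. The fact $2Close_k(a_i,a_j)$ puts $2Close_k$ into the set of binary predicates leaving $a_i$ in $\db$. So $2Close_k$ must also leave $a_i$ in $\aabox_0$: there is some vertex $c$ with $2Close_k(a_i,c)\in\aabox_0$.

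Such an atom must come from a single $\newaboxneg_\ell$, since $\aabox_0$ is the union of the $\newaboxneg_\ell$. By Specification \ref{spec:of-disjointness}(1) these domains are pairwise disjoint, and $a_i\in\dom{\newaboxneg_i}$, so $\ell=i$. So far this only says that $2Close_k$ leaves $a_i$ inside $\newaboxneg_i$. That alone is not a contradiction: for $k=i$ it is exactly the loop $2Close_i(a_i,a_i)$ from Specification \ref{spec:5}(b).

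The main obstacle is that types record only the edge labels at a vertex, not the endpoints. Freezing $a_i$ alone shows that $a_i$ already has an outgoing $2Close_k$ edge, so adding another one need not change its type. We therefore use the other endpoint. By symmetry, $a_j\in\dom{\aabox_0}$ is also frozen, and $2Close_k$ enters $a_j$ in $\db$, so $2Close_k$ enters $a_j$ in $\newaboxneg_j$. Again this is consistent when $k=j$, because of the loop.

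At this point I am unsure the types of $a_i$ and $a_j$ alone rule out the extra edge. Suppose $k=i=j$-style loops exist at both ends, e.g. $2Close_i$ both leaves $a_i$ and enters $a_j$ in the ABox. Then adding $2Close_i(a_i,a_j)$ preserves both types, and freezing gives nothing more. To close this gap I would add $a_i$-specific unary predicates together with a disjointness axiom; this is essentially the encoding in the proof of Lemma \ref{lem:freeze-redukcja}, where $\tbox'$ contains $\mathrm{is}_c\sqsubseteq\neg B$ for the concepts $B$ false at $c$.

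I suspect the intended argument is simpler: apply the type-preservation step to each endpoint in turn and obtain the contradiction for every $k$. That would require that in $\newaboxneg_j$ no $2Close_k$ with $k\neq j$ enters $a_j$. It would also require handling the case $k\in\{i,j\}$ separately via Specification \ref{spec:5}. In practice I would check this against the concrete definition of the beds in Section \ref{sec:implementation}, where I expect the only $2Close$ atoms to be the loops $2Close_\ell(a_\ell,a_\ell)$.
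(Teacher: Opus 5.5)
\textbf{Gap: the case split on $k$ is missing.} You have the right ingredients. You freeze each endpoint, and you use Specification~\ref{spec:of-disjointness}(1) to see that the frozen type of $a_i$ is its type in $\newaboxneg_i$. But you never put them together. The missing step is elementary: since $i\neq j$, the index $k$ differs from at least one of $i$ and $j$.

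If $k\neq i$, then no $2Close_k$ atom leaves $a_i$ in $\newaboxneg_i$, because the only $2Close$ atom in that bed is the loop $2Close_i(a_i,a_i)$. Hence none leaves $a_i$ in $\aabox_0$. By $\freeze{\aabox_0}$, none leaves $a_i$ in $\db$, so $\db\not\models 2Close_k(a_i,a_j)$. If $k\neq j$, the same argument applies to incoming $2Close_k$ edges at $a_j$. This is exactly the paper's proof.

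There is no leftover case. The case $k=i$ is excluded at the $a_j$ end, and $k=j$ at the $a_i$ end. So your claim that $k\in\{i,j\}$ must be handled separately via Specification~\ref{spec:5} is wrong. The scenario you worry about would need a $2Close_k$ edge leaving $a_i$ in $\newaboxneg_i$ and another entering $a_j$ in $\newaboxneg_j$, so that the extra edge is type-invisible at both ends. That requires $k=i$ and $k=j$ at once, which contradicts $i\neq j$. Your observation that types do not record endpoints is correct, but it only applies at one end at a time, and one end always suffices.

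\textbf{The proposed repair is unnecessary.} Adding $\mathrm{is}_c$-style predicates and extra disjointness axioms is not needed. It would also change $\ontolneg$ rather than prove the lemma for the given knowledge base.

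\textbf{What you did get right.} The one non-trivial input is that the beds contain no $2Close$ atoms other than the loops $2Close_\ell(a_\ell,a_\ell)$, and you identified it. This does not follow from the Specifications: Specification~\ref{spec:5}(b) only asserts that the loop is present. The paper also takes it tacitly from the concrete beds, when it says that $a_i$ is not involved in any $2Close_k$ atom of $\newaboxneg_i$ for $k\neq i$. State that fact explicitly as a hypothesis, add the case split on $k\neq i$ versus $k\neq j$, and the proof is complete.
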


For the proof of the Lemma suppose $k\neq i$. Then $a_i$ is not involved in any atom of $2Close_j$ in $\newaboxneg_i$.
So the statement $\freeze{\bigcup_{i\in \III} \newaboxneg_i}$ in $\ontolneg$   guarantees that $a_i$ is not involved in any
atom of $2Close_k$ also in $\db$. The case
when $k\neq j$ is analogous. \qed

\subsection{The query \texorpdfstring{$\psineg$}{Q} and the main idea.}\label{sec:psinek}

\begin{definition}

$\psineg$ is 
%defined as
the existential closure of the query:
$$ \pphi(x) \;\; \wedge \; \bigwedge_{i\in\III} ({\pphi_i}\;\wedge \neg 2Close_j(x,x_i)) \wedge\; $$
$$ \wedge \bigwedge_{i,i'\in \III, i\neq i'}\bigwedge_{j\in \III} (\neg 2Close_j(x_i,x_i')\wedge
\neg 2Close_j(x_i,x)) \quad \qedhere$$
\end{definition}

The idea here is that there are $|\III|$+1 characters that need to be put to bed. We have $|\III|$ beds available,
but there may also be an extra place for one of the bears to sleep, koala\footnote{The authors are perfectly aware of the fact that koalas are not bears.} way, somewhere inside the 
``tree'' $\db[\love(\db)]$ -- the existence of this
place is an undecidable property of $\thue$ (recall Specification \ref{spec:sensors}). And there is some mechanism in place
(namely the part involving the $2Close$ predicates) that is supposed to prohibit putting more than one character to the same bed.

Now, once we have both $\ontolneg$ and $\psineg$ specified, we are able to prove the equivalence $\bigpumpkin$ 
from Section \ref{subsec:source}.

\subsection{The \texorpdfstring{$\Leftarrow$}{<=} direction.}

Suppose $\thue$ is a negative instance of $\Thue$. We just need to construct one structure $\dbmf$, such that $\dbmf\models\ontolneg$
but $\dbmf\not\models\psineg$.

This $\dbmf$ will be defined as a disjoint union of structures:
$$ \dbmf \;\; = \;\;  \dbmbb \; \cupdot  \; \bigcupdot_{i\in \III}\; \newaboxneg_i $$
Where $\dbmbb $ is as in Specification \ref{spec:sensors} (neg).

First of all we need to show that $\dbmf\models\ontolneg$. But clearly, all the facts from $\newaboxneg \cup \bigcup_{i\in \III} \newaboxneg_i$  are true in $\dbmf$ (recall that $\dbmbb \models \newaboxneg$). Also the conditions imposed by $\freeze{\bigcup_{i\in \III} \newaboxneg_i}$ are obviously
satisfied
in
$\dbmf$.
This is because there are no new atoms in $\dbmf$, involving any of the constants from
$\dom{\bigcup_{i\in \III} \newaboxneg_i}$,
compared to $\bigcup_{i\in \III} \newaboxneg_i$ itself. What concerns $\ttbox$, the structure  $\dbmf$ is a disjoint union of several structures and it is enough to verify for each of these structures that it satisfies $\ttbox$.  But we know that
$\dbmbb\models  \ttbox$
(Specification \ref{spec:sensors} (neg)) and we also know that each $i\in\III$ it holds that
$\dbmbb \models \newaboxneg$
(Specification \ref{spec:5} (a)).

Now we need to prove that $\dbmf\not\models\psineg$. So suppose $\dbmf\models\psineg$. What would it mean exactly?
It would mean that a set of mappings $\{h\}\cup \{h_i\}_{i\in \III}$ could be found, such that:\\
\vspace{1mm}
\textbullet~$h_i$ satisfies $\jest{\pphi_i}$ in $\dbmf$ for each $i$,\\
 \textbullet~ $h$ satisfies $\jest{\pphi}$ in $\dbmf$,\\
\vspace{1mm}
\noindent
and such that neither 
$h(x)$ is  too close to any of the  $h_i(x_i)$ nor   $h_i(x_i)$  is too close to any $h_{i'}(x_{i'})$ 
for   $i\neq i'$.

%%%%%%%%%%%%%%%%%%%%%%%%%%%%%%%%%%%%%%%%%%%%

Recall that each of $\pphi_i$, for $i\in \III$, is a connected query
(by \ref{spec:4} (con)). So the entire image of $h_i$ must fall into one of the connected components of $\dbmf$.
But it cannot fall into $\dbmbb$ (by Specification \ref{spec:sensors} (neg)). Also 
the image of $h$ cannot fall into  $\dbmbb$ (see  Specification \ref{spec:sensors} (neg) again).
Apart from $\dbmbb$ the structure $\dbmf$ has $|\III|$ connected components: the beds 
$\newaboxneg_{i}$ for $i\in \III$. So, by the bearhole principle,
images of at least two of the aforementioned $|\III|+1 $
mappings must fall into some $\newaboxneg_{k}$. In other words, two characters must be put to the same bed.
But, by Specification \ref{spec:5} (e) and (f) this would mean that their heads (some $x_i$ and $x_j$, or $x_i$ and $x$) are mapped to the 
same pillow $a_k$. Which would mean that the constraint $\neg2Close_k(x_i,x_j)$
(or $\neg2Close_k(x_i,x)$)
would 
not be satisfied.

%%%%%%%%%%%%%%%%%%%%%%%%%%%%%%%%%%%%%%%%%%%%%
\subsection{The \texorpdfstring{$\Rightarrow$}{=>} direction.}

Now suppose that $\thue$ is a positive instance of $\Thue$ and that $\db\models\ontolneg$. We  need to
show that $\db\models\psineg$.

Again, this means that we need to find a set of mappings $\{h\}\cup \{h_i\}_{i\in \III}$, such that
$h_i$ satisfies $\jest{\pphi_i}$ in $\db$
for each $i$
(and $h$ satisfies $\jest{\pphi}$ in $\db$) and such that neither
$h(x)$ is too close to any of the  $h_i(x_i)$, nor   $h_i(x_i)$ is too close to $h_{i'}(x_{i'})$ for any $i\neq i$.

So let us first use the assumption that  $\thue$ is a positive instance. It implies (by Specification \ref{spec:sensors} (pos))
that there exists $\iii\in \III$ and a mapping $h_\iii$ which satisfies $\jest{\pphi_\iii}$ in $\db$, and such that
$h_\iii(x_\iii)\in \love(\db)$ (in fact, $h_\iii$ maps all variables of $\jest{\pphi_\iii}$ to $\love(\db)$, but it is
only the image of $x_\iii$ that really matters to us).

Recall that (by Lemma \ref{lem:no-hatred}) vertex $h_\iii(x_\iii)$ is not too close to anything. So in
particular $h_\iii(x_\iii)$ may not happen to be equal to any $a_j$, for $j\in \III$, because each such $a_j$ is 
too close to itself (by Specification \ref{spec:5} (b)).

Now we are ready to satisfy $\jest{\pphi}$. We put $h(x)=a_\iii$ (recall Specification \ref{spec:5} (d)). See, we do not need this $a_\iii$ any more for
$x_\iii$, because $x_\iii$ is  already happily mapped to some vertex from $\love(\db)$.

As the next step, we will satisfy all the queries $\jest{\pphi_i}$ for $i\neq \iii$. This is easy: just put $h_i=g_i$
 (recall Specification \ref{spec:5} (c)).

 Now we need to notice that neither $h_i(x_i)$ is too close to $h_j(x_j)$, for some $i\neq j$,
 nor any $h_i(x_i)$ is too close to $h(x)$. But, as we said above, $h_\iii(x_\iii)$ is never too close. And in the remaining cases
 the two variables in question are mapped to two some $a_i$ and $a_j$, for $i\neq j$ and we use Lemma \ref{jeden-hejt}. \qed

 This ends the proof of the equivalence $\bigpumpkin$ and hence also the proof of Theorem \ref{mainp}. Modulo, of course,
 an implementation of the specifications.

\section{Definitions}\label{sec:implementation}

In this section we define the objects listed as (i)--(iv) at the beginning of Section \ref{sec:high-level}

\subsection{Source of Undecidability}

Our source of undecidability will be, a specially tailored for this occasion, variant of the word problem for finitely generated semigroups\footnote{This problem is also known as the word problem for Thue systems}. 

\begin{definition}[source of undecidability]\label{def:source}
Decision problem  $\Thue$ is defined as follows.

An  instance $\thue$ of $\Thue$ is a pair $\pair{\MMM, \Pi}$, where:

\begin{itemize}[topsep=-0.0em]

\item
$\MMM = \MMM_0 \cup \set{\es, \esp} $ is a finite set of \emph{symbols}, such that $\es,\esp \not\in \MMM_0$, 
and  that $ \start, \finish, \blank \in \MMM_0$;

\item $\Pi \subseteq \MMM^* \times \MMM^*$ is a {\bf symmetric} relation, and  

\item $\Pi$ is a union of three disjoint sets $\Pi_{22}$, $\Pi_{21}$, and $\Pi_{12}$ s.t.:

\begin{enumerate}[label=(\roman*), topsep=-0.1em]
    \item $\Pi_{22} \subseteq \MMM_0^2 \times \MMM_0^2$, and 
    \item $\Pi_{21} =\set{\pair{\blank\esp,  \es} , \pair{\blank\es,  \esp}}$;
    \item $\Pi_{12} =\set{\pair{\es, \blank \esp}, \pair{\esp, \blank \es}}$;
   
   \item if  $\pair{\cpred\dpred,\epred\fpred} \in \Pi_{22}$, for some symbols $\cpred,\dpred,\epred,\fpred\in \MMM_0$,  then $\dpred\neq \fpred$.
\end{enumerate}
\end{itemize}
We define $\szym$ as the relation satisfying $wlv \szym wrv$, for all
$w,v \in \MMM^*$ and every pair $\pair{l,r} \in \Pi$.
Since $\Pi$ is symmetric, the relation $\szym$ is symmetric as well. Finally, we define
$\szymeq$ as the smallest equivalence relation containing $\szym$.
An instance  $\thue  $ of $\Thue$ is {\em positive} if $\start\es\szymeq \finish\es$ and is {\em negative} otherwise.
\end{definition}

%This resolves \cref{prom:of-problem}. 
The following observation can be obtained via a standard reduction from the Turing Machines  halting problem. Notice that 
$\blank$ is the blank symbol, and  $\es$ and $\esp$ are two end-of-the-tape symbols, able to produce any number of blanks. Because of certain technical nuance we need two of them (see the proof of Lemma \ref{lem:krolik}). The pairs from 
$\Pi_{22}$ simulate the moves of the machine head. Notice that this means 
that, if $\pair{\cpred\dpred,\epred\fpred} \in \Pi_{22}$ then exactly one 
of the symbols $\dpred$, $\fpred$ encodes the machine head, and in consequence we indeed 
 have that $\dpred\neq \fpred$.

\begin{observation}
    The problem, whether a given instance of $\Thue$ is positive, is undecidable and recursively enumerable.
\end{observation}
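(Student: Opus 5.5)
Recursive enumerability is the easy half. The set of words reachable from $\start\es$ under $\szym$ can be enumerated by breadth-first search, because each word has only finitely many $\szym$-neighbours ($\Pi$ is finite). So $\start\es\szymeq\finish\es$ is semi-decidable: we enumerate until $\finish\es$ appears.

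For undecidability we reduce from the halting problem of a deterministic Turing machine $M$ started on the empty tape. We may assume that $M$ only moves to the right of its start cell and that, when it halts, it first erases the tape and returns its head to the left end in a dedicated final state. We then encode configurations as words. Take $\MMM_0$ to contain $\start$, $\finish$, $\blank$, the tape symbols, and composite symbols $\langle q,\sigma\rangle$ recording "head in state $q$ reading $\sigma$". We arrange that the initial configuration corresponds to $\start\es$ and the final one to $\finish\es$; for instance, $\start$ is read as "head in the initial state on the leftmost blank", and $\finish$ as "head in the halting state on the leftmost blank". The end markers $\es,\esp$ together with $\Pi_{12},\Pi_{21}$ let a configuration be padded with, or stripped of, any number of blanks at the right end. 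Padding $\es$ gives $\blank\esp$, which gives $\blank\blank\es$, and so on. The reason for alternating $\es$ and $\esp$ is presumably that the exact form of $\Pi_{21},\Pi_{12}$ is fixed, so this is just the prescribed form; we only need to check that $\es\szymeq\blank^n\es$ or $\blank^n\esp$ according to parity.

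Each transition is encoded as a length-2 to length-2 rule in $\Pi_{22}$. A right move $\delta(q,\sigma)=(q',\tau,R)$ becomes $\langle q,\sigma\rangle\rho\to\tau\langle q',\rho\rangle$ for every tape symbol $\rho$. A left move becomes $\rho\langle q,\sigma\rangle\to\langle q',\rho\rangle\tau$. We then add the symmetric pairs. Condition (iv), $\dpred\neq\fpred$, holds because in every pair exactly one of the two second letters carries a head. Forward direction: if $M$ halts, the computation, with padding done first, gives a chain from $\start\es$ to $\finish\es$.

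The converse is the main obstacle, since $\szymeq$ is symmetric and a derivation may run transitions backwards. The standard Post/Markov argument handles it. First we show that every word $\szymeq$-equivalent to $\start\es$ contains exactly one head symbol (every rule preserves that count) and a unique terminal $\es/\esp$ with no blanks or markers misplaced; thus each such word encodes a configuration modulo trailing blanks. Determinism of $M$ means each such word has at most one forward-applicable rule up to padding moves. Take a shortest derivation. If it ever has a backward step followed by a forward step, the two steps act on the same configuration via the same transition, so they cancel and the derivation can be shortened. Padding steps commute with transition steps, except at the right end, where a padded blank is needed; this is checked locally. Hence a shortest derivation is a sequence of forward steps followed by backward steps. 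Since $\finish\es$ encodes a halting configuration, which has no forward successor, the backward part is empty. So the computation from the initial configuration reaches the halting configuration, and $M$ halts.
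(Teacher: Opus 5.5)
Your proposal is correct and is the standard Turing-machine reduction the paper invokes. The paper only sketches it: $\Pi_{22}$ simulates the head moves, so exactly one of $\dpred,\fpred$ carries the head and $\dpred\neq\fpred$ follows. Your converse, via determinism and a shortest derivation in which backward-then-forward pairs cancel, is the usual way to complete it. Your ``checked locally'' step works because (length of the word before the marker)${}+[\text{marker}=\esp]$ mod $2$ is invariant under every rule. Hence, within the class of $\start\es$, two words encoding the same configuration modulo trailing blanks are padding-equivalent, which is what the cancellation needs.
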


From now on,  we fix an instance $\thue$ of $\problem$. 
Note that this means that $\MMM$ is also fixed.

Before we proceed towards definitions of the remaining objects 
let us recall the following well-known fact:

\begin{fact}\label{fact}
 For a set of rewriting rules $\Pi$, and for ${w},{w}'\in  \MMM^*$ we have ${w}\szymeq {w}'$ {\iffi} there is a finite path from ${w}$ to ${w}'$ in the infinite undirected graph, whose 
 vertices are words from $\MMM^*$, and whose set of edges is the relation $\szym$.
 \end{fact}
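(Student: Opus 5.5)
The statement to prove is Fact \ref{fact}: $w\szymeq w'$ holds if and only if there is a finite path from $w$ to $w'$ in the graph whose vertices are the words of $\MMM^*$ and whose edges are given by $\szym$. The plan is to identify $\szymeq$ with the reflexive--transitive closure of $\szym$. First, note that $\szym$ is symmetric, since $\Pi$ is symmetric: if $wlv\szym wrv$ via $\pair{l,r}\in\Pi$, then $\pair{r,l}\in\Pi$ gives $wrv\szym wlv$. Hence the graph is genuinely undirected.

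Next, define $w\approx w'$ iff there is a finite path (possibly of length zero) from $w$ to $w'$ in this graph, and check three properties of $\approx$.
\begin{itemize}
\item Reflexivity holds via the length-zero path.
\item Symmetry holds by reversing a path, which uses the symmetry of $\szym$ established above.
\item Transitivity holds by concatenating two paths.
\end{itemize}
So $\approx$ is an equivalence relation. It contains $\szym$, since each edge is a path of length one. By minimality of $\szymeq$ we get $\szymeq\subseteq\approx$, which is the ``only if'' direction.

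For the ``if'' direction, induct on the length $n$ of a path $w=w_0,w_1,\dots,w_n=w'$. For $n=0$, reflexivity of $\szymeq$ gives $w\szymeq w$. For the step, the induction hypothesis gives $w_0\szymeq w_{n-1}$, and $w_{n-1}\szym w_n$ implies $w_{n-1}\szymeq w_n$ because $\szymeq$ contains $\szym$. Transitivity of $\szymeq$ then yields $w_0\szymeq w_n$. Thus $\approx\subseteq\szymeq$.

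There is no real obstacle. The only point needing care is that $\szymeq$ is defined as the smallest equivalence relation containing $\szym$, so both inclusions should be argued explicitly as above: minimality for one direction, and closure under reflexivity and transitivity for the other.
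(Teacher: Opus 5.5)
Your argument is correct and complete. The paper gives no proof of this statement and simply recalls it as a well-known fact; your two inclusions (minimality of $\szymeq$ for one direction, induction on path length for the other) are the standard justification that the equivalence closure of the symmetric relation $\szym$ equals its reflexive--transitive closure, i.e.\ connectivity in the graph.
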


%%%%%%%%%%%%%%%%%%%%%%%%%%%%%%%%%%%%%%%%%%%%%%%

\subsection{Our little bears: definition of queries \texorpdfstring{$\beari$}{A\_i}. }\label{sec:low-bears-beds}

\begin{definition}
 The set   $\III$ is defined as $\Pi \cup \set{\diam}$. 
\end{definition}

For each name\footnote{We like to think of elements of $\III$ as \emph{ names} of our bears.}  $i\in \III$ we shall now define the query $\beari$, with one free variable $x_i$. Notice that there are four kinds of elements in $\III$ and for this reason we will consider four cases (see \cref{fig:all-queries} for visualization):

\begin{figure}
    \centering
    \includegraphics[width=1.0\linewidth]{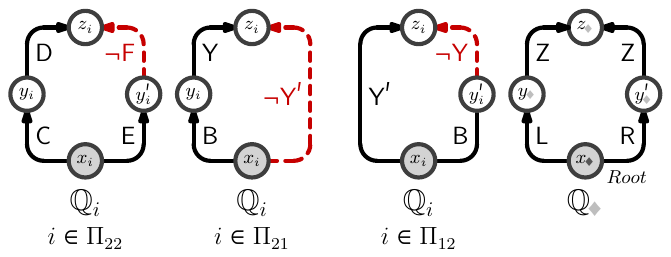}
    \caption{From left to right: $\beari$ for when: (1) $i = \pair{\cpred\dpred,\epred\fpred} \in \Pi_{22}$, (2) $i = \pair{\blank\ypred,\ypred'} \in \Pi_{21}$, (3) $i = \pair{\ypred',\ypred\blank} \in \Pi_{12}$, and (4) $i = \idiam$. Red dashed arrows with negated labels indicate negated atoms. All variables are existentially quantified, except those with a gray background.}
    \label{fig:all-queries}
\end{figure}

\noindent
{\textbf{Case}} $i=\diam$. Then
$\bear_\idiam(x_\idiam) = $\\
$\exists y_\idiam\!,y_\idiam'\!, z_\idiam\; Root(x_\idiam),\start(x_\idiam, y_\idiam), \zpred(y_\idiam,z_\idiam), \finish(x_\idiam, y_\idiam'), \zpred(y_\idiam',z_\idiam).$

\medskip
\noindent
\textbf{Case} $i\in \Pi_{22}$. Then  $i$ {is of the form} $\pair{\cpred\dpred,\epred\fpred}$ for some 
$\cpred,\dpred,\epred,\fpred\in \MMM_0$, and:\\
$\bear_i(x_i) = \exists y_i,y_i', z\;\; \cpred(x_i, y_i), \dpred(y_i,z_i),
                                     \epred(x_i, y_i'), \neg \fpred(y_i',z_i).$

\medskip
\noindent
\textbf{Case} $i\in \Pi_{21}$. Then $i$ is of the 
form $\pair{\blank\ypred,\ypredp}$ for some $\{\ypred, \ypredp \}=\{\es,\esp\}$, and:\\
\phantom{a} \hfill
$\bear_i(x_i) = \exists y_i, z_i\;\;\; \blank(x_i,y_i), \ypred(y_i,z_i),\neg \ypredp(x_i,z_i)$\hfill\phantom{a}

\medskip
\noindent
\textbf{Case} $i\in \Pi_{12}$. Then $i$ is of the form $\pair{\ypredp, \blank\ypred}$ for some $\{\ypred, \ypredp \}=\{\es,\esp\}$ and:\\
\phantom{a}\hfill
$\bear_i(x_i) = \exists y_i, z_i\;\;\; \blank(x_i,y_i), \neg \ypred(y_i,z_i), \ypredp(x,z)$
\hfill\phantom{a}

We provide visualization for the queries below:

\subsection{Specification \texorpdfstring{\ref{spec:4}}{of queries}.}

Having the queries $\beari$ defined, for each $i\in \III$, we need to prove that
 \cref{spec:4} is indeed satisfied. It is clear that all the defined queries are connected. 
 What remains to be proved are  negative (neg) and positive (pos) parts of \cref{spec:4}. This part 
 is analogous to the respective part of the construction in \cite{MO25}.
 %\orange{omówić} 
 For the sake of completeness however
 we present the argument in Appendix A.
 
%%%%%%%%%%%%%%%%%%%%%%%%%%%%%%%%%%%%%%%%%%%%%%%%%%%%%%%%%%%%%%%
\subsection{Bears want to sleep. The structures \texorpdfstring{$\newaboxneg_i$}{A\_i}.}

Definitions of the structures $\newaboxneg_i$, for $i\in \MMM$, are best presented as a figure. From left to right $\newaboxneg_i$ for: (1) $i = \pair{\cpred\dpred,\epred\fpred} \in \Pi_{22}$, 
(2) $i = \pair{\blank\ypred',\ypred} \in \Pi_{21}$, (3) $i = \pair{\ypred',\ypred\blank}   \in \Pi_{12}$ (for some 
$\{\ypred,\ypred'\}=\{\es,\esp\})$,
and (4)  $i = \diam$. Unlabelled (rainbow) edges indicate the whole set $\MMM$; predicate $\pillow$ is indicated by a rhomboid outline. Notice that \cref{spec:of-disjointness} is  satisfied.

\begin{center}
    \includegraphics[width=1.0\linewidth]{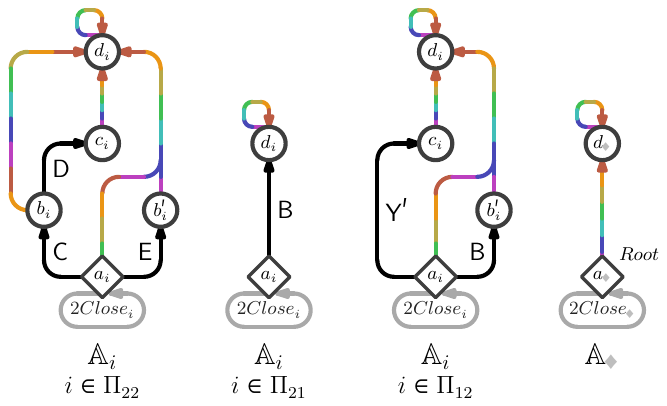}
\end{center}

%%%%%%%%%%%%%%%%%%%%%%%%%%%%%%%%%%%%%%%%%%%%%%%%%%%%%%%%%%%%%%%%%%%%%%%%%%
% VERIFICATION OF SPECIFICATIONS
%%%%%%%%%%%%%%%%%%%%%%%%%%%%%%%%%%%%%%%%%%%%%%%%%%%%%%%%%%%%%%%%%%%%%%%%%%
\section{Specification \texorpdfstring{\ref{spec:5}}{of Beds pt. 2}}\label{sec:dowody}
In this section we verify that Specification~\ref{spec:5} is indeed satisfied by our queries and structures. 

Regarding the conditions (b), (d) and and (f) of \cref{spec:5} there is nothing to prove. 
Let us now deal with (a):

\begin{observation} $\newaboxneg_i\models \ttbox $ for all names $i\in \III$.
\end{observation}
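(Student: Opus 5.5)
We verify the claim statement by statement, checking each of the (finitely many) TBox axioms of Definition \ref{def:tbox} against each bed $\newaboxneg_i$ as drawn in the figure. The axioms split into two groups: the existential (positive) inclusions (i) and (ii), and the disjointness axioms (iii). For (ii) we only need to note that no vertex of any $\newaboxneg_i$ carries the unary predicate $Root$; by Specification \ref{spec:of-disjointness} the only $Root$ vertex in $\aabox$ is $a$, which lies outside every bed. So (ii) holds vacuously in each bed. Similarly, $\unaryA$ is asserted only of $a$, so the last two axioms of group (iii) are also vacuous in the beds.

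For (i) we must check that every vertex with an incoming $\MMM$-edge has outgoing edges of every label in $\MMM$. Here we use the rainbow convention of the figure. In each bed, every vertex that is the target of some $\MMM$-edge is either a vertex equipped with a rainbow loop, or a vertex from which rainbow edges emanate. In particular, the sink vertices of the pictured query patterns (the images of $z_i$, $y_i$, etc.) are made to carry rainbow self-loops or rainbow edges to such a looping vertex. Hence every label in $\MMM$ leaves each such vertex. This is a direct inspection of the four pictures, done case by case: $\Pi_{22}$, $\Pi_{21}$, $\Pi_{12}$, $\diam$.

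The remaining axioms, $\exists\xpred^{-1}\sqsubseteq\neg\exists 2Close_i$ and $\exists\xpred^{-1}\sqsubseteq\neg\exists 2Close_i^{-1}$, require that no vertex incident to a $2Close$ atom has an incoming $\MMM$-edge. In each bed the only $2Close$ atom is the loop $2Close_i(a_i,a_i)$ from Specification \ref{spec:5}(b). So it suffices to see that the pillow $a_i$ plays the role of the distinguished vertex $x_i$ and has only outgoing $\MMM$-edges. This is consistent with the queries of Section \ref{sec:low-bears-beds}, in which $x_i$ occurs only as a source of binary atoms. We check in the figure that no edge returns to $a_i$, and in particular that the rainbow edges never loop on $a_i$.

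The main obstacle is exactly this tension. Rainbow loops are needed to close off the positive axiom (i), but they must be placed so that $a_i$ stays without incoming $\MMM$-edges. We therefore check carefully, in each of the four figures, that every vertex except $a_i$ is $\MMM$-closed and that $a_i$ is not an $\MMM$-target.
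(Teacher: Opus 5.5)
Your overall plan is the paper's: check the axioms of Definition \ref{def:tbox} group by group by inspecting the four beds. Your treatment of (i) and of the $2Close$ axioms in (iii) also matches it: every vertex other than $a_i$ has an outgoing rainbow edge, $a_i$ has no incoming $\MMM$-edge, and $\pair{a_i,a_i}$ is the only pair in a $2Close$ atom. The gap is in axiom (ii). Your claim that no vertex of any bed carries $Root$ is false. The query $\bear_{\diam}(x_{\diam})$ contains the atom $Root(x_{\diam})$. Specification \ref{spec:5}(c), realized by $g_{\diam}(x_{\diam})=a_{\diam}$, requires this atom to hold in $\newaboxneg_{\diam}$. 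So $Root(a_{\diam})\in\newaboxneg_{\diam}\subseteq\aabox$, and the proof of Lemma \ref{lem:last-spec-lemma} relies on exactly this fact. Specification \ref{spec:of-disjointness}, which you cite, only says that the bed domains are pairwise disjoint and avoid $a$. It says nothing about where $Root$ holds.

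So (ii) is not vacuous in $\newaboxneg_{\diam}$. Your check of (i) does not cover it either, since you only establish $\MMM$-closedness for vertices other than $a_i$, and $a_{\diam}$ has no incoming edges. You need the extra step the paper makes: $a_{\diam}$ is the only $Root$ vertex in the beds, and the rainbow edge from $a_{\diam}$ to $d_{\diam}$ gives $\spred(a_{\diam},d_{\diam})$ for every $\spred\in\MMM$. This is the vertex where the tension you describe is sharpest. $a_{\diam}$ must emit every label in $\MMM$ (for (ii)) while receiving none (for (iii), because $2Close_{\diam}(a_{\diam},a_{\diam})$ holds). Your conclusion that $\unaryA$ is absent from the beds is correct, but it should come from inspecting the beds, not from the definition of $\aabox$. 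That same style of argument is what led you astray for $Root$.
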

\begin{proof} 
        
            \begin{enumerate}[itemsep=0em]
                \item[(h1)] Every vertex of each $\newaboxneg_i$ except for $a_i$ has an 
                outgoing rainbow edge, denoting all $\spred \in\MMM$. And none of the $a_i$ has any incoming edges from $\MMM$.
                \item[(h2)] The only vertex that satisfies $Root$ is $a_\idiam$, and $\newaboxneg_\idiam \models \spred(a_\idiam, d_\idiam)$ for each $\spred \in \MMM$.    
                \item[(h3)] Only $\pair{a_i,a_i}$ satisfies $\tooclose_j$ (for any $j$) in any $\newaboxneg_i$. But none of the $a_i$ has any incoming edges from $\MMM$.  And also  $\unaryA$ is not satisfied anywhere in any  $\newaboxneg_i$.    \qedhere
            \end{enumerate}
    
\end{proof}

The (c) of  \cref{spec:5} comes next:

\newcommand{\apsto}{\!\mapsto\!}\label{lem:krolik}
\begin{lemma} \textbullet~ For each $i \in \III$ there exists a mapping $g_i$ satisfying $\jest{\pphi_{i}}$ in  $\newaboxneg_i$. \\
\textbullet~ If $\db \models \ontolneg$ then the above mapping  $g_i$ also   satisfies $\jest{\pphi_{i}}$ in $\db$.

\end{lemma}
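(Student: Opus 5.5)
We work case by case over the four kinds of names $i\in\III$, reading off the structures $\newaboxneg_i$ from the figure. For the first bullet we exhibit $g_i$ explicitly: $x_i\mapsto a_i$, and the other variables go to the vertices of $\newaboxneg_i$ along the drawn edges. For example, for $i=\pair{\cpred\dpred,\epred\fpred}\in\Pi_{22}$ we send $y_i$ and $y_i'$ to the $\cpred$- and $\epred$-successors of $a_i$, and $z_i$ to the common vertex reached by $\dpred$ from $g_i(y_i)$. We then check that all positive atoms of $\pphi_i$ are facts of $\newaboxneg_i$, and that the atom under negation (e.g.\ $\fpred(g_i(y_i'),g_i(z_i))$) is absent from $\newaboxneg_i$. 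For $i=\diam$ there are no negated atoms, so this is pure homomorphism checking, and $Root(a_\idiam)$ is in the bed.

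For the second bullet, positive atoms are preserved because $\newaboxneg_i\subseteq\aabox\subseteq\db$. The real work is the negated atoms. Here the plan is to use $\freeze{\aabox_0}$: for every $c\in\dom{\newaboxneg_i}$ we have $\type{c}{\db}=\type{c}{\newaboxneg_i}$. Observation \ref{obs:trivial} warns that types alone cannot forbid atoms that do not change types. So the structures must be designed so that the negated atom $\neg S(u,v)$, with $u=g_i(\cdot)$ and $v=g_i(\cdot)$, is excluded for a type reason. That reason is either that $u$ has no outgoing $S$-edge in $\newaboxneg_i$, so $S$ is not in the out-component of $\type{u}{\newaboxneg_i}$, or that $v$ has no incoming $S$-edge.

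The key step is therefore to check, for each case, that one endpoint of the negated atom has the relevant predicate missing from its type in the bed. In case $\Pi_{22}$, the vertex $g_i(z_i)$ is reached only by a $\dpred$-edge, and we use condition (iv) of Definition \ref{def:source}, namely $\dpred\neq\fpred$, to conclude that $\fpred$ does not lead into $g_i(z_i)$ in $\newaboxneg_i$. Freezing then forbids $\fpred(\cdot,g_i(z_i))$ in $\db$. In cases $\Pi_{21}$ and $\Pi_{12}$ the negated atom is $\neg\ypredp(x_i,z_i)$, respectively $\neg\ypred(y_i,z_i)$. There we argue that $g_i(z_i)$ is entered only by an edge labelled by the other end symbol $\ypred$, respectively $\ypredp$. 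This is exactly why two distinct end-of-tape symbols $\es\neq\esp$ are needed: with a single symbol the forbidden edge label would coincide with the incoming one and freezing could not exclude it.

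The main obstacle is the rainbow edges. Every non-pillow vertex of the beds has outgoing edges for all of $\MMM$, so its out-type cannot certify absence. We must therefore make sure that the witnessing endpoint is a vertex whose in-type (or $a_i$ whose out-type) omits the forbidden symbol. This requires a careful reading of the figure so that each $z$-vertex is entered by a single label, and that label is different from the negated one.
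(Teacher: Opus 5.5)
Your overall strategy matches the paper's. You give an explicit $g_i$ with $x_i\mapsto a_i$, positive atoms survive because $\newaboxneg_i\subseteq\db$, and each negated atom is excluded because one endpoint lacks the relevant label in its type in the bed, which $\freeze{\aabox_0}$ transfers to $\db$. Your treatment of three cases coincides with the paper's: case $\diam$, case $\Pi_{22}$ (in-type of $g_i(z_i)=c_i$, using $\dpred\neq\fpred$), and case $\Pi_{12}$ (in-type of $c_i$, using $\ypred\neq\ypredp$).

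The case $\Pi_{21}$, however, fails as you wrote it. With $i=\pair{\blank\ypred,\ypredp}$, you want to exclude $\ypredp(g_i(x_i),g_i(z_i))$ by claiming that $g_i(z_i)$ is entered only by a $\ypred$-edge. In the actual bed this is false.

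\begin{itemize}
\item The paper's $g_i$ sends both $y_i$ and $z_i$ to $d_i$. This is forced: the bed contains only $a_i$ and $d_i$, as $\size(\newaboxneg_i)=0$ records, and $a_i$ has no incoming $\MMM$-edges.
\item Every vertex other than $a_i$ has a rainbow edge into $d_i$, so $d_i$ carries a rainbow self-loop.
\item Hence every symbol of $\MMM$, $\ypredp$ included, already lies in the in-part of $\type{d_i}{\newaboxneg_i}$, and freezing $d_i$ does not forbid $\ypredp(a_i,d_i)$ in $\db$.
\end{itemize}

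You noticed that rainbow edges ruin out-types, but they also ruin the in-type of the sink $d_i$.

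The correct witness is the other endpoint. The only $\MMM$-edge leaving $a_i$ in $\newaboxneg_i$ is labelled $\blank$, and $\ypredp\neq\blank$ because $\blank\in\MMM_0$ while $\ypredp\in\{\es,\esp\}$. Hence $\newaboxneg_i\not\models\exists x\,\ypredp(a_i,x)$, and by freezing the same holds in $\db$. This is exactly the paper's argument for this case.

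Two of your claims change accordingly. First, your closing requirement that every $z$-vertex be entered by a single label is false for this bed. Second, the distinctness of $\es$ and $\esp$ plays no role in case $\Pi_{21}$; it is used only in case $\Pi_{12}$.
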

\begin{proof} We begin by defining $g_i$:
    \begin{itemize}[topsep=-0.1em, itemsep=-0.1em]
        \item \parbox{2.5cm}{If $i \in \Pi_{22}$ then $g_i$} $ = \set{x_i \apsto a_i, y_i \apsto b_i, y_i' \apsto b_i', z_i \apsto c_i}$
        \item \parbox{2.5cm}{If $i \in \Pi_{21}$ then $g_i$} $= \set{x_i \apsto a_i, y_i \apsto d_i, z_i \apsto d_i}$
        \item \parbox{2.5cm}{If $i \in \Pi_{12}$ then $g_i$} $= \set{x_i \apsto a_i, y_i' \apsto b_i', z_i \apsto c_i}$
        \item \parbox{2.5cm}{If $i = \diam$ then $g_\idiam$} $ \!\!\!\!=\set{x_\idiam \apsto a_\idiam, y_\idiam \apsto d_\idiam, y_i' \apsto d_\idiam, z_i \apsto d_\idiam}.$
    \end{itemize}

    It is straightforward to verify that, in each case, the above defined $g_i$  is indeed a mapping 
    satisfying $\jest{\pphi_{i}}$ in  $\newaboxneg_i$.

    For the second claim, take any $\db$ such that $\db \models \ontolneg$. It is here where {\bf we are  finally going to use the $\freezed$
    operator!}

    Clearly, $g_i$ is a homomorphism from the positive part of $\bear_i $ to 
    $\newaboxneg_i$, and (since  $\newaboxneg_i \subseteq \db$ ) it is also a homomorphism
    from the positive part of $\bear_i $ to $\db$.

    The tricky part is to make sure that also the constraint imposed by the negative atom from  $\newaboxneg_i$ 
   remains satisfied in  $\db$. As usual, we need to consider four cases:

    \begin{itemize}[topsep=-0.1em, itemsep=-0.1em, itemindent=-2mm ]
    \item $i = \diam$. Then there is no negative atom in $\newaboxneg_i$, so there is nothing to prove. 
        
    \item $i = \pair{\cpred\dpred,\epred\fpred} \in \Pi_{22}$. We need to show that 
    $\db\not\models \fpred(b_i', c_i)$. But $\newaboxneg_i \not\models \exists{x}\,\fpred(x,c_i)$ (recall that 
    we assumed, in Definition \ref{def:source}, that $\fpred\neq \dpred$)
    so (since the types of vertices of  $\newaboxneg_i$ are frozen by the condition $\freeze{\aabox_0}$
    in the definition of $\ontolneg$, we get that $\db\not\models \exists{x}\,\fpred(x,c_i)$.

        \item  $i = \pair{\blank\ypred, \ypred'} \in \Pi_{21}$.  
      Then we need to show that $\db\not\models \ypred'(a_i,d_i)$.
         But $\newaboxneg_i \!\not\models\! \exists{x}\,\ypred'(a_i,x)$, and we use the freezing argument again. 
        \item  $i = \pair{\ypred',\blank\ypred} \in \Pi_{12}$. 
        We need to show that 
        $\db\not\models \ypred(b_i', c_i)$. But $\newaboxneg_i \not\models \exists{x}\,\ypred(x,c_i)$ as 
        (recall that, by Definition \ref{def:source}, $\ypred \neq \ypred'$) and again the freezing argument applies. 
    \end{itemize}
   \end{proof}

\subsection{Tall Bears and Their Short Beds. }

The last condition to be verified, from Specification \ref{spec:5}, is (e):

 \begin{lemma}\label[lemma]{lem:last-spec-lemma} For each $i,j \in \III$ if $f$ is any mapping  satisfying $\jest{\pphi_{i}}$ in  $\newaboxneg_j$ then $f(x_i)=a_j$.
 \end{lemma}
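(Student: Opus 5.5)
The plan is a uniform argument on the shape of the beds, followed by a short check for each of the four kinds of names. From the figure and from the mappings $g_i$ of Lemma \ref{lem:krolik}, I read off the following properties of every bed $\newaboxneg_j$. I have not been able to check them against the figure itself, so they are the assumption the whole plan rests on.

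\begin{itemize}
\item[(P1)] Every vertex other than $a_j$ has outgoing $\MMM$-edges.
\item[(P2)] There is a sink vertex $d_j$ carrying a full rainbow self-loop, that is, $\spred(d_j,d_j)$ for every $\spred\in\MMM$.
\item[(P3)] Every vertex other than $a_j$ (the intermediate vertices $b_j,b'_j,c_j$ and $d_j$ itself) has as $\MMM$-successors only the rainbow edges into $d_j$.
\end{itemize}

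These match $g_i$: for $i\in\Pi_{21}$ the atoms $\blank(a_i,d_i)$ and $\ypred(d_i,d_i)$ are used, and for $i=\diam$ all of $y,y',z$ go to $d_\idiam$. I would first state (P1)–(P3) as an explicit observation about $\newaboxneg_j$, read off the figure.

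\textbf{The key claim.} If $v\neq a_j$ is a vertex of $\newaboxneg_j$ and $u$ is reached from $v$ by an $\MMM$-edge, then $u=d_j$. Moreover every $\MMM$-atom on the pair $\pair{d_j,d_j}$ holds. Now fix $i,j$ and a satisfying mapping $f$ of $\jest{\pphi_i}$ in $\newaboxneg_j$, and suppose $f(x_i)\neq a_j$. I want a contradiction.

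\textbf{Case checks.}
\begin{itemize}
\item $i=\diam$. Here $Root(x_\idiam)$ forces $f(x_\idiam)$ to be a $Root$ vertex. The only one is $a_\idiam$, so $j=\diam$ and $f(x_i)=a_j$.
\item $i\in\Pi_{21}$. The positive atom $\blank(x_i,y_i)$ with the key claim gives $f(y_i)=d_j$. Then $\ypred(y_i,z_i)$ gives $f(z_i)=d_j$. So $f(x_i)$ is $d_j$ or an intermediate vertex. By (P3) the atom $\ypredp(f(x_i),d_j)$ holds, which violates $\neg\ypredp(x_i,z_i)$.
\item $i\in\Pi_{12}$. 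From $\blank(x_i,y_i)$ and $\ypredp(x_i,z_i)$ we get $f(y_i)=f(z_i)=d_j$. But $\ypred(d_j,d_j)$ holds, violating $\neg\ypred(y_i,z_i)$.
\item $i\in\Pi_{22}$. The positive atoms give $f(y_i)=f(y'_i)=d_j$, and then $\dpred(y_i,z_i)$ gives $f(z_i)=d_j$. So $\fpred(d_j,d_j)$ holds, violating $\neg\fpred(y'_i,z_i)$.
\end{itemize}
In every case the negated atom lands on the rainbow-looped sink, or on a vertex with a rainbow edge into the sink, and is therefore false. Hence $f(x_i)=a_j$.

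\textbf{Main obstacle.} The delicate step is establishing (P3) precisely from the figure. If some bed (say for $\Pi_{22}$) has an intermediate vertex with a non-rainbow, labelled edge into another intermediate vertex, like $\dpred(b_j,c_j)$, the key claim fails. For such vertices I would do a direct inspection instead: list the edges out of $b_j,b'_j,c_j$ and check that any two-step path $\cpred\dpred$ starting at a non-$a_j$ vertex still ends in $d_j$. This would use that the only labelled edges leave $a_j$ or its immediate successors, and that a query's second step from such a successor reaches only $d_j$. Only if that fails would I fall back on the condition $\dpred\neq\fpred$ from Definition \ref{def:source}.
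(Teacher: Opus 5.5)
There is a genuine gap, and it sits exactly in the case that carries the content of the lemma.

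Your (P3) is false for the beds $\newaboxneg_j$ with $j=\pair{\cpred\dpred,\epred\fpred}\in\Pi_{22}$. You can see this from the paper without the figure. The mapping $g_j$ sends the atom $\dpred(y_j,z_j)$ to $\dpred(b_j,c_j)$. The proof of Specification~\ref{spec:5}(c) uses $\newaboxneg_j\not\models\exists x\,\fpred(x,c_j)$, so $c_j$ receives no rainbow edge and hence $c_j\neq d_j$. So $b_j\neq a_j$ has a $\dpred$-successor other than $d_j$, and your key claim fails.

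For $i\in\Pi_{21}\cup\Pi_{22}$ your fallback does repair this. The positive path from $x_i$ to $z_i$ has two edges, and every two-edge walk from a vertex $\neq a_j$ ends in $d_j$, so $f(z_i)=d_j$. The rainbow edge from $f(u)\neq a_j$ into $d_j$ then violates the negated atom. Your intermediate claims $f(y_i)=d_j$ are still wrong: $f(y_i)$ may be $c_j$ when the first label equals $\dpred$.

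The fallback does not cover $i=\pair{\ypredp,\blank\ypred}\in\Pi_{12}$ placed in a $\Pi_{22}$ bed. There the only positive connection from $x_i$ to $z_i$ is the single atom $\ypredp(x_i,z_i)$. With $f(x_i)=b_j$, your step $f(z_i)=d_j$ needs $\ypredp\neq\dpred$. If the two were equal, then $x_i\mapsto b_j$, $z_i\mapsto c_j$, $y_i\mapsto d_j$ would satisfy $\bear_i$ in $\newaboxneg_j$, because the atom $\ypred(d_j,c_j)$ is absent, and the lemma would be false.

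The missing ingredient is that $\ypredp\in\{\es,\esp\}$ while $\dpred\in\MMM_0$, and these sets are disjoint by Definition~\ref{def:source}. The condition $\dpred\neq\fpred$ you hold in reserve is irrelevant here; it is what makes Specification~\ref{spec:5}(c) work.

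This is exactly how the paper ends its proof:
\begin{itemize}
\item It derives $f(z_i)\neq d_j$ from the rainbow edges into $d_j$.
\item It compares the length of the positive path from $x_i$ to $z_i$ with the longest path in $\newaboxneg_j$ that starts away from $a_j$ and ends away from $d_j$.
\item This leaves only $i\in\Pi_{12}$ against $j\in\Pi_{22}$, with $f(x_i)=b_j$ and $f(z_i)=c_j$.
\item That case is closed by $\ypredp\neq\dpred$.
\end{itemize}
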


\noindent
{\em Proof:} If $i=\diam$ then $\jest{\pphi_{i}}$ requires that $Root(f(x_\diam))$, and there is nothing to prove then, because $a_\diam$ is the only vertex in any $\newaboxneg_j$ satisfying this requirement.

So fix $i\in \Pi$ and $j\in\III$,  and $f$, as in \cref{lem:last-spec-lemma} and
assume, towards contradiction, that  $f(x_i)\neq a_j$. Then, of course:

\begin{observation}
    $f(u)\neq a_j$ ~holds for each variable $u$ of $\jest{\pphi_{i}}$.
\end{observation}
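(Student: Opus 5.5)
We need to show: under the assumption $f(x_i)\neq a_j$, every variable $u$ of $\jest{\pphi_{i}}$ satisfies $f(u)\neq a_j$. The only tool needed is the observation, made while verifying (h1) for $\newaboxneg_j$, that no vertex $a_j$ has any incoming edge from $\MMM$. The plan is to exploit that every query $\pphi_i$ with $i\in\Pi$ is built from positive $\MMM$-atoms in which every variable other than $x_i$ occurs as the \emph{target} (second argument) of some positive atom.

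First, $x_i$ itself is handled by the assumption $f(x_i)\neq a_j$. Next I would go through the three remaining cases of the definition of $\pphi_i$.

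\textbf{Case} $i=\pair{\cpred\dpred,\epred\fpred}\in\Pi_{22}$. The positive atoms are $\cpred(x_i,y_i)$, $\dpred(y_i,z_i)$ and $\epred(x_i,y_i')$. So each of $y_i$, $z_i$, $y_i'$ is the second argument of a positive $\MMM$-atom.

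\textbf{Case} $i\in\Pi_{21}$. The positive atoms are $\blank(x_i,y_i)$ and $\ypred(y_i,z_i)$, which cover $y_i$ and $z_i$.

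\textbf{Case} $i\in\Pi_{12}$. The positive atoms are $\blank(x_i,y_i)$ and $\ypredp(x_i,z_i)$. I read the paper's $\ypredp(x,z)$ as a typo for $\ypredp(x_i,z_i)$; these atoms cover $y_i$ and $z_i$.

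In each case, let $u\neq x_i$ be a variable and $\spred(v,u)$ a positive atom of $\pphi_i$ with $\spred\in\MMM$. Since $f$ is a satisfying mapping into $\newaboxneg_j$, we have $\newaboxneg_j\models\spred(f(v),f(u))$. So $f(u)$ has an incoming $\MMM$-edge, and hence $f(u)\neq a_j$.

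There is no real obstacle here. The only care needed is to confirm, from the figure, that $a_j$ indeed has no incoming $\MMM$-edges in every bed. This is exactly what was asserted in the proof that $\newaboxneg_j\models\ttbox$, and there it is part of the argument for both (h1) and (h3).
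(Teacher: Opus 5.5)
Your proof is correct and is exactly the argument behind the paper's ``of course'': $f(x_i)\neq a_j$ holds by assumption, every other variable of $\pphi_i$ is the target of a positive $\MMM$-atom, and $a_j$ has no incoming $\MMM$-edge in $\newaboxneg_j$. You do not even need the figure for that last fact, since Specification~\ref{spec:5}(a),(b) already forces it: $\newaboxneg_j\models 2Close_j(a_j,a_j)$ together with the axiom $\exists\xpred^{-1}\sqsubseteq\neg\exists 2Close_j$ rules out any incoming $\MMM$-edge at $a_j$.
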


Let now $\neg\spred(u,z_i)$ be the negative atom in $\jest{\pphi_{i}}$.
Notice that $\spred(f(u),d_j)\in\newaboxneg_j$.
This is because  $f(u)\neq a_j$, and because whatever $j$ we took, there is a rainbow edge from each vertex of $\newaboxneg_j$ (except possibly for $a_j$) to $d_j$. We have just proved:

\begin{observation}
    $f(z_i)\neq d_j$.
\end{observation}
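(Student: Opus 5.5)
The plan is a direct contradiction argument that uses the definition of a satisfying mapping together with the previous observation. Fix $i\in\Pi$, $j\in\III$ and a mapping $f$ satisfying $\jest{\pphi_i}$ in $\newaboxneg_j$, with $f(x_i)\neq a_j$. Let $\neg\spred(u,z_i)$ be the unique negative atom of $\jest{\pphi_i}$. Inspecting the four cases of the definition of $\beari$ for $i\in\Pi$ identifies $u$: it is $y_i'$ for $i\in\Pi_{22}$, $x_i$ for $i\in\Pi_{21}$, and $y_i$ for $i\in\Pi_{12}$. The symbol $\spred$ is always an element of $\MMM$. By safety of negation, $u$ and $z_i$ also occur in positive atoms, so $f(u)$ and $f(z_i)$ are genuine vertices of $\newaboxneg_j$.

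First I would invoke the preceding observation, which gives $f(u)\neq a_j$. Next I would use the shape of the beds $\newaboxneg_j$ as given in the figure. Every vertex of $\newaboxneg_j$ other than $a_j$ has a rainbow edge to $d_j$, that is, an edge labelled by every symbol of $\MMM$. This includes $d_j$ itself, which carries a rainbow self-loop. Since $\spred\in\MMM$ and $f(u)\neq a_j$, this yields $\spred(f(u),d_j)\in\newaboxneg_j$.

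Finally, suppose towards contradiction that $f(z_i)=d_j$. Then the image of the negated atom $\neg\spred(u,z_i)$ under $f$ would be $\spred(f(u),d_j)$, which we have just shown is a fact of $\newaboxneg_j$. This violates the defining condition of a satisfying mapping, namely that negated atoms must be mapped to non-facts. Hence $f(z_i)\neq d_j$.

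The only step needing real care is the structural claim about the beds: for every $j\in\III$, including $j=\diam$ and all three kinds of $j\in\Pi$, every vertex except $a_j$ (in particular $d_j$ itself) has a full rainbow edge into $d_j$. I would verify this case by case against the four pictured structures. The rest of the argument is immediate.
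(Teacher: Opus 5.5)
Your proof is correct and follows essentially the same route as the paper: the preceding observation gives $f(u)\neq a_j$, and every vertex of $\newaboxneg_j$ other than $a_j$ (including $d_j$ itself, via its rainbow loop) has a rainbow edge into $d_j$. Hence $\spred(f(u),d_j)\in\newaboxneg_j$, so $f(z_i)=d_j$ would violate the negated atom $\neg\spred(u,z_i)$. Your case-by-case identification of $u$ ($y_i'$, $x_i$, $y_i$) is a small extra detail that the paper leaves implicit.
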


Now we will use the fact that $f$ maps paths in $\beari$ to paths in $\newaboxneg_j$.

\begin{definition}
    For $\bear_k$  let $\size(\bear_k)$ be the {minimal} number of vertices on a positive (using only non-negated atoms) path from $x_k$ to $z_k$.  
    For  $\newaboxneg_k$ let $\size(\newaboxneg_k)$ be the {maximal} number of vertices on a  path from some $s\neq a_k$ to some  $t\neq d_k$. 
    %We use a function $\size(\cdot)$ to extract the size from an object.
\end{definition}

It is very easy to verify that:

\begin{observation}\label{obs:bear-bed-sizes}
$\size(\newaboxneg_\diam)= 0$. 
%and $\size(\bear_\diam)= 2$. 
And:\\
if $k \in \Pi_{22} $ then $\size(\newaboxneg_k)= 2$ and $\size(\bear_k)= 3$;\\
if $k \in \Pi_{21} $ then $\size(\newaboxneg_k)= 0$ and $\size(\bear_k)= 3$;\\
if $k \in \Pi_{12} $ then $\size(\newaboxneg_k)= 1$ and $\size(\bear_k)= 2$.
 \end{observation}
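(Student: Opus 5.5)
The statement is Observation~\ref{obs:bear-bed-sizes}. It is a direct inspection of the finite objects involved: the four query shapes $\bear_k$ defined in Section~\ref{sec:low-bears-beds} and the four bed shapes $\newaboxneg_k$ from the figure. The plan is to handle the two kinds of quantity separately, case by case over the four kinds of names $k\in\III$.

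For the queries, I read $\size(\bear_k)$ off the list of positive atoms.
\begin{itemize}
\item If $k=\pair{\cpred\dpred,\epred\fpred}\in\Pi_{22}$, the positive atoms joining $x_k$ to $z_k$ are $\cpred(x_k,y_k),\dpred(y_k,z_k)$. The atom $\epred(x_k,y_k')$ leads to $y_k'$, which reaches $z_k$ only through the negated atom $\neg\fpred$. So the shortest positive path is $x_k,y_k,z_k$, with $3$ vertices.
\item If $k\in\Pi_{21}$, the atom $\ypredp(x_k,z_k)$ is negated. The only positive route is $\blank(x_k,y_k),\ypred(y_k,z_k)$, again $3$ vertices.
\item If $k\in\Pi_{12}$, the positive atom $\ypredp(x_k,z_k)$ gives a direct edge, so the path has $2$ vertices.
\end{itemize}
Minimality is immediate in each case because the queries have at most four variables. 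No value is claimed for $\bear_\diam$.

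For the beds, I go through the four pictures. In each I delete $a_k$ as a possible start and $d_k$ as a possible end, and find the longest directed path among the remaining vertices.
\begin{itemize}
\item In $\newaboxneg_\diam$, every edge either leaves $a_\diam$ or enters $d_\diam$. Hence no path avoids both forbidden endpoints, and the size is $0$.
\item In the $\Pi_{21}$ bed, all edges likewise start at $a_k$ or end at $d_k$, giving $0$.
\item In the $\Pi_{12}$ bed, the only admissible paths are single vertices such as $b_k'$ or $c_k$. There is no edge between two non-forbidden vertices, so the size is $1$.
\item In the $\Pi_{22}$ bed, the edge $\dpred(b_k,c_k)$ (or $\epred$, $\fpred$-type edges not touching $a_k,d_k$) yields a two-vertex path. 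No longer one exists, since every other edge touches $a_k$ or $d_k$. The size is $2$.
\end{itemize}

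The only step that needs care is the bed computation. The rainbow edges into $d_k$ must be checked not to create longer admissible paths. They cannot, because any path that uses such an edge ends at $d_k$, which is excluded as an endpoint. I also need to read "path from $s$ to $t$" so that a single vertex counts as a path with $s=t$, which is what gives $1$ for $\Pi_{12}$. With that convention fixed, each value follows by enumerating the few edges in the corresponding picture.
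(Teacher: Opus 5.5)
Your proposal is correct and is exactly the direct inspection the paper has in mind (it gives no argument beyond ``it is very easy to verify''), including the single-vertex-path convention, which is forced by the value $\size(\newaboxneg_k)=1$ for $k\in\Pi_{12}$. Two small precision points. First, under that convention the value $0$ for $\newaboxneg_\diam$ and for the $\Pi_{21}$ bed also needs that these beds have no vertex other than $a_k$ and $d_k$, since your edge argument only rules out paths containing an edge. Second, in the $\Pi_{22}$ bed the parenthetical about $\epred$/$\fpred$ edges should be dropped: $\dpred(b_k,c_k)$ is the only edge avoiding $a_k$ and $d_k$, and in particular there is no $\fpred$ edge into $c_k$, which is exactly what the freezing argument for $g_k$ relies on.
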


Since, as we said, $f$ maps paths in $\beari$ to paths in $\newaboxneg_j$ we get:

\begin{observation}\label{ob:29}
    $\size(\bear_i) \geq \size(\newaboxneg_j)$.
\end{observation}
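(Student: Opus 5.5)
Since $i\in\Pi$ has been fixed in the proof of \cref{lem:last-spec-lemma}, I would prove the inequality exactly as worded by a direct case comparison using \cref{obs:bear-bed-sizes}. The homomorphism $f$ itself is not needed for this particular inequality.

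First, list the possible values of $\size(\bear_i)$. The index $i$ ranges over $\Pi=\Pi_{22}\cup\Pi_{21}\cup\Pi_{12}$ (the case $i=\diam$ was disposed of at the start of the proof). By \cref{obs:bear-bed-sizes}, $\size(\bear_i)$ equals $3$ for $i\in\Pi_{22}$, $3$ for $i\in\Pi_{21}$, and $2$ for $i\in\Pi_{12}$. Hence $\size(\bear_i)\ge 2$.

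Second, list the possible values of $\size(\newaboxneg_j)$. The index $j$ ranges over all of $\III=\Pi\cup\{\diam\}$. By the same observation, $\size(\newaboxneg_j)$ equals $0$ for $j=\diam$, $2$ for $j\in\Pi_{22}$, $0$ for $j\in\Pi_{21}$, and $1$ for $j\in\Pi_{12}$. Hence $\size(\newaboxneg_j)\le 2$.

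Combining the two bounds gives $\size(\bear_i)\ge 2\ge \size(\newaboxneg_j)$, which is the claimed inequality. The only points to double-check are the entries of \cref{obs:bear-bed-sizes} against the definition of $\size$. For $\bear_i$ this means reading off the shortest positive $x_i$--$z_i$ path from the query definitions in \cref{sec:low-bears-beds}. For example, for $i\in\Pi_{12}$ the path is the single positive edge $\ypredp(x_i,z_i)$, which has two vertices. For $\newaboxneg_j$ it means reading off the longest path that avoids starting at $a_j$ and ending at $d_j$ in the bed pictures. I would treat these values as given by the earlier observation.

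The main obstacle is interpretive rather than technical. The path-mapping remark in the text ("$f$ maps paths in $\beari$ to paths in $\newaboxneg_j$") naturally yields the opposite comparison. A shortest positive path from $x_i$ to $z_i$ is sent by $f$ into a walk in $\newaboxneg_j$ from $f(x_i)\neq a_j$ to $f(z_i)\neq d_j$, which bounds $\size(\bear_i)$ above by $\size(\newaboxneg_j)$. I would therefore not rely on that remark for the inequality as stated, and would rest the proof purely on the numerical table, which makes the stated direction hold for every pair $i\in\Pi$, $j\in\III$.
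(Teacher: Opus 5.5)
Your table argument is logically sound for the inequality as printed. The trouble is that the printed $\geq$ is a misprint, and by proving it you discard the only idea that gives the observation any content. As you point out yourself, the paper's justification (``$f$ maps paths in $\beari$ to paths in $\newaboxneg_j$'') yields $\size(\beari)\le\size(\newaboxneg_j)$. That is the direction the proof of Lemma~\ref{lem:last-spec-lemma} uses.

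With $\geq$ the observation holds for every pair $i\in\Pi$, $j\in\III$, which is exactly what your computation shows. The next step, deriving $i\in\Pi_{12}$, would then be a non sequitur, and the proof by contradiction could not reach its contradiction. With $\le$, your own bounds $\size(\beari)\ge 2\ge\size(\newaboxneg_j)$ force both sizes to equal $2$, so $i\in\Pi_{12}$ and $j\in\Pi_{22}$. The text's ``$j\in\Pi_{12}$'' is a second typo, since the very next line takes $j=\pair{\cpred\dpred,\epred\fpred}\in\Pi_{22}$. Your remark that $f$ ``is not needed'' should have been a warning sign: in this proof by contradiction, this observation is precisely where the hypothetical $f$ is used.

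So the argument you set aside in your last paragraph is the intended proof, and what remains is to make it rigorous. Take a positive path of $\beari$ from $x_i$ to $z_i$ with $\size(\beari)$ vertices. Since $f$ preserves positive atoms, its image is a directed walk in $\newaboxneg_j$ with as many positions. It runs from $f(x_i)\neq a_j$ to $f(z_i)\neq d_j$, the two facts established just before the observation. No position of the walk is $a_j$, because no variable is mapped to $a_j$. No position is $d_j$ either: in the beds as described, a walk that enters $d_j$ can only stay there via the rainbow loop, so it could not end at $f(z_i)\neq d_j$. Each bed with $a_j$ and $d_j$ removed is acyclic, so the walk is a genuine path from some $s\neq a_j$ to some $t\neq d_j$. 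Hence $\size(\beari)\le\size(\newaboxneg_j)$. This last check, that the image walk avoids $d_j$ and repeats no vertex, is left implicit in the paper, but it is what makes the path-mapping remark a proof.
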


It follows immediately from Observations \ref{obs:bear-bed-sizes} and \ref{ob:29} that
$i\in \Pi_{12}$ and $j\in \Pi_{12}$.
So let $i$ be  $\pair{\ypred', \blank\ypred}$ and let $j$ be  $\pair{\cpred\dpred, \epred\fpred}$. Recall that $f(z_i) \neq d_j$. Therefore $f(x_i) = b_j$ and $f(z_i) = c_j$. But $\ypredp \neq \dpred$ by definition of $\Pi$.
Contradiction. \qed

\newpage
%% The file kr.bst is a bibliography style file for BibTeX 0.99c
\bibliographystyle{abbrv}
\bibliography{bibliography}

%%%%%%%%%%%%% APPENDIX A

 \section{Appendix A}
 
\subsection{Negative Part}
\begin{lemma}\label[lemma]{lem:spec-5-neg}
If $\thue$ is a negative instance of the problem $\Thue$, then
there exists a structure $\dbmbb$ such that $\dbmbb\models \newaboxneg$ and $\dbmbb\models \ttbox$  but for each $i\in \III$
it holds that $\dbmbb\not\models \jest{\pphi_i}$. Additionally, $\dbmbb\not\models \exists x\, \pillow(x)$.
\end{lemma}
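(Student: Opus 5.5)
The plan is to take for $\dbmbb$ the ``Cayley graph'' of the monoid $\MMM^*/\szymeq$, rooted at a fresh vertex $a$. Since every rule of $\Pi$ is applied inside an arbitrary context $w\,\cdot\,v$, the relation $\szymeq$ is a congruence on $\MMM^*$. In particular, $w\szymeq w'$ implies $w\xpred\szymeq w'\xpred$ for every $\xpred\in\MMM$. For a nonempty word $w$ write $[w]$ for its $\szymeq$-class, and treat these classes as constants distinct from $a$. Define
$$\dbmbb=\{Root(a),\unaryA(a)\}\cup\{\xpred(a,[\xpred]):\xpred\in\MMM\}\cup\{\xpred([w],[w\xpred]): w\in\MMM^+,\ \xpred\in\MMM\}.$$
Congruence makes the last set well defined. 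It also gives the key property that the structure is \emph{functional}: every vertex $u$ has exactly one $\xpred$-successor. We denote this successor by $u\cdot\xpred$, with the convention that $a$ plays the role of the empty word, so $a\cdot w=[w]$ for $w\neq\varepsilon$. Note that $a$ has no incoming edges, since no nonempty word is equivalent to the empty word (the rules never produce $\varepsilon$).

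First I verify the easy conditions. Clearly $\dbmbb\models\newaboxneg$. For $\ttbox$: (h1) holds because every vertex with an incoming $\MMM$-edge is some $[w]$, and $[w]$ has an outgoing $\ypred$-edge for every $\ypred$. (h2) holds because the only $Root$ is $a$, which has all outgoing $\MMM$-edges. (h3) is vacuous because $\dbmbb$ contains no $2Close_i$ atoms. There is no $Pillow$ atom at all, so $\dbmbb\not\models\exists x\,Pillow(x)$.

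The main step is to show $\dbmbb\not\models\jest{\pphi_i}$ for every $i\in\III$, by cases. In each case functionality pins down all variables once $x_i$ is fixed, say $x_i\mapsto u=a\cdot w$.
\begin{itemize}
\item $i=\diam$: $Root$ forces $u=a$. Then $y\mapsto[\start]$, $y'\mapsto[\finish]$, and the shared $z$ forces $[\start\zpred]=[\finish\zpred]$, i.e.\ $\start\es\szymeq\finish\es$. This contradicts negativity of $\thue$, and it is the only place negativity is used.
\item $i=\pair{\cpred\dpred,\epred\fpred}\in\Pi_{22}$: functionality gives $z_i\mapsto u\cdot\cpred\dpred$ and $y'_i\mapsto u\cdot\epred$. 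Since $w\cpred\dpred\szym w\epred\fpred$, the atom $\fpred(u\cdot\epred,\,u\cdot\cpred\dpred)$ is present, so the negated atom fails.
\item $i\in\Pi_{21}$: here $z_i\mapsto u\cdot\blank\ypred=u\cdot\ypredp$, so $\ypredp(x_i,z_i)$ holds and the negated atom fails.
\item $i\in\Pi_{12}$: here $z_i\mapsto u\cdot\ypredp=u\cdot\blank\ypred$, so $\ypred(y_i,z_i)$ holds with $y_i\mapsto u\cdot\blank$, and again the negated atom fails.
\end{itemize}

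The only delicate point I foresee is bookkeeping. One must check that $a$ behaves like the empty word without being identified with any class. One must also make sure that the negated atoms are evaluated against the unique successors, which is exactly what functionality guarantees. With that in place, each case reduces to a single application of a rule of $\Pi$ (or of $\start\es\not\szymeq\finish\es$).
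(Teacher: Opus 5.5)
Your proof is correct and follows essentially the same approach as the paper. $\dbmbb$ is the Cayley graph of $\MMM^*/\szymeq$ with $Root$ and $\unaryA$ at the identity, and the same per-name case analysis via the unique successors (which the paper uses implicitly) refutes each $\jest{\beari}$, with negativity of the instance needed only for $\diam$. Your fresh root $a$ is just the paper's $[\varepsilon]_{\szymeq}$ under another name, since, as you observe, $[\varepsilon]_{\szymeq}=\{\varepsilon\}$.
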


We devote the rest of this subsection to the proof of \cref{lem:spec-5-neg}. To this end we first define structure $\dbmbb$ over $\Sigma$.

\begin{definition}
The set $\VVV$ of vertices of $\dbmbb$ is the set of all equivalence classes of the relation $\szymeq$, that is $\VVV = \MMM^*/{\szymeq}$.

\smallskip
\noindent
For each $w,v \in \MMM^* $ and each symbol $\spred \in \MMM$ define:
$$\dbmbb\models \spred([w]_{\szymeq}, [v]_{\szymeq})  \quad\iffi\quad w\spred \szymeq v$$
One needs to notice here that the above definition
is independent of the choice of representatives $w$ and $v$, from their respective  equivalence classes.
So assume that  $w\spred \szymeq v$ and take $w',v'$ such that $w\szymeq  w'$ and $v\szymeq  v'$. Then:
$$w'\spred\; \szymeq \;  w\spred \; \szymeq \; v \; \szymeq \; v'$$
\noindent
Define also:
$$\dbmbb\models \korz([w]_{\szymeq})  \quad\iffi\quad  w \szymeq \varepsilon$$

and 

$$\dbmbb\models \unaryA([w]_{\szymeq})  \quad\iffi\quad  w \szymeq \varepsilon$$
\noindent
where $\varepsilon$ denotes the empty word.
Finally, let:
$ a=[\varepsilon]_{\szymeq}.$
\end{definition}

By definition, we clearly get:
\begin{observation}
    $\dbmbb \models \newaboxneg$ and  $\dbmbb \not \models \exists{x}\, Pillow(x).$ 
\end{observation}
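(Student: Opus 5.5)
The statement is that $\dbmbb \models \newaboxneg$ and $\dbmbb \not\models \exists x\, Pillow(x)$. Both parts should follow directly from how $\dbmbb$ was defined, so I would simply unfold that definition.

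\textbf{First part.} We need $\dbmbb \models \newaboxneg$, where $\newaboxneg = \{Root(a), \unaryA(a)\}$ and $a$ is interpreted as $[\varepsilon]_{\szymeq}$. By definition, $\dbmbb \models Root([w]_{\szymeq})$ iff $w \szymeq \varepsilon$. Since $\szymeq$ is reflexive, we have $\varepsilon \szymeq \varepsilon$, so $Root(a)$ holds. The defining clause for $\unaryA$ is literally the same condition, so $\unaryA(a)$ holds by the identical argument. Hence both facts of $\newaboxneg$ are true in $\dbmbb$.

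One point should be noted. Under the paper's convention, structures are sets of facts and vertices are constants. So I would read $\dbmbb$ as the set of facts listed by the definition, with the equivalence classes (and $a$ in particular) treated as constants. The definition already fixes $a=[\varepsilon]_{\szymeq}$, so nothing further is needed there. The well-definedness of the $\spred$-edges was checked right after the definition; for the unary predicates $Root$ and $\unaryA$ it is immediate, because their condition $w\szymeq\varepsilon$ depends only on the class of $w$.

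\textbf{Second part.} We need $\dbmbb \not\models \exists x\, Pillow(x)$. The definition of $\dbmbb$ introduces facts only for the predicates in $\MMM$, for $Root$, and for $\unaryA$. It introduces no facts for $Pillow$ (and none for any $2Close_i$). Since a structure is exactly its set of facts, $\dbmbb$ contains no atom $Pillow(c)$, and so the existential query fails.

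There is no real obstacle here. The only step that needs a moment of care is the bookkeeping convention just described, namely that predicates absent from the definition are empty in $\dbmbb$. The substantive work of Lemma~\ref{lem:spec-5-neg}, showing $\dbmbb\models\ttbox$ and $\dbmbb\not\models\jest{\pphi_i}$, lies beyond this observation.
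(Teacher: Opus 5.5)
Your proposal is correct and takes the same approach as the paper, which simply says the observation holds by definition: $Root(a)$ and $\unaryA(a)$ hold because $\varepsilon \szymeq \varepsilon$, and $Pillow$ is empty in $\dbmbb$ because the definition never introduces it. Your extra remarks, on reading equivalence classes as constants and on the well-definedness of the unary predicates, are harmless additions.
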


The remaining two conditions require simple proofs:
\begin{lemma}
    $\dbmbb \models \ttbox.$
\end{lemma}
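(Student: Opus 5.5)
We have to check every statement of $\ttbox$ from Definition~\ref{def:tbox} in $\dbmbb$, one kind at a time, unfolding the definition of the edges of $\dbmbb$.

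\textbf{Statements (i) and (ii).} The key point is that every vertex of $\dbmbb$ has an outgoing edge of every colour. Take any vertex $[w]_{\szymeq}$ and any $\ypred\in\MMM$. By reflexivity of $\szymeq$ we have $w\ypred\szymeq w\ypred$. So the definition of $\dbmbb$ gives $\dbmbb\models \ypred([w]_{\szymeq},[w\ypred]_{\szymeq})$. Hence $\dbmbb\models\exists\ypred([w]_{\szymeq})$ holds at every vertex, whatever its incoming edges or unary predicates are. This settles (i): every vertex reached by some $\xpred\in\MMM$ has an outgoing $\ypred$. It also settles (ii), in particular for the only $Root$ vertex $a=[\varepsilon]_{\szymeq}$. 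The only care needed is that this is independent of the representative $w$, which the paper already checked when defining $\dbmbb$.

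\textbf{The disjointness statements (iii).} These concern $2Close_i$ and its inverse, for every $i\in\III$. The definition of $\dbmbb$ puts no $2Close_i$ atoms into the structure at all. So $\exists 2Close_i$ and $\exists 2Close_i^{-1}$ are empty in $\dbmbb$, and every inclusion of the form $B\sqsubseteq\neg\exists 2Close_i^{(-1)}$ holds vacuously. This covers both the clauses with $\exists\xpred^{-1}$ on the left and the clauses with $\unaryA$ on the left.

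No step is really an obstacle. The only thing to get right is to notice that the (neg) structure is total, meaning every vertex has successors in all colours. This comes straight from closing the word under right multiplication by each symbol. The relation $\Pi$ plays no role in this lemma; it only matters later, for the claim that no $\jest{\pphi_i}$ is satisfied in $\dbmbb$.
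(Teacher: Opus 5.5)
Your proof is correct and follows essentially the same route as the paper. Every vertex $[w]_{\szymeq}$ has the outgoing edge $\ypred([w]_{\szymeq},[w\ypred]_{\szymeq})$ for each $\ypred\in\MMM$, which settles (i) and (ii). The structure contains no $\tooclose_i$ atoms at all, so all the disjointness statements (iii) hold vacuously.
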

\begin{proof}
    By definition, each vertex of $\dbmbb$ has an outgoing $\spred$ edge for each $\spred\in\MMM$, that is for each $\szymeqrep{w} \in \VVV$ and $\spred \in \MMM$, we have $\dbmbb \models \spred(\szymeqrep{w},\szymeqrep{w\spred})$. Thus both (i) and (ii) of $\ttbox$ (Definition \ref{def:tbox}) are satisfied.

    We get (iii) by noting that for all $i \in \III$ the relation $\tooclose_i$ does not appear in the definition of $\dbmbb$.
\end{proof}

\begin{lemma}
For each name $i$ it holds that $\dbmbb\not\models \jest{\beari}$.
\end{lemma}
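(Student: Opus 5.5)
The plan is to use the fact that $\dbmbb$ is the Cayley-like graph of the quotient monoid $\MMM^*/{\szymeq}$, so every positive path in it is fully determined by its start vertex and its label. First I record a functionality property. For every $\szymeqrep{w}\in\VVV$ and every $\spred\in\MMM$, the vertex $\szymeqrep{w}$ has exactly one outgoing $\spred$-edge, namely to $\szymeqrep{w\spred}$. Indeed, $\dbmbb\models\spred(\szymeqrep{w},\szymeqrep{v})$ means $w\spred\szymeq v$, that is $\szymeqrep{v}=\szymeqrep{w\spred}$. By induction, a positive path labelled $\spred_1\cdots\spred_n$ from $\szymeqrep{w}$ ends in $\szymeqrep{w\spred_1\cdots\spred_n}$. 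Also, $\korz$ holds only at $a=\szymeqrep{\varepsilon}$.

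Next I assume, towards a contradiction, that some satisfying mapping $f$ of $\jest{\beari}$ into $\dbmbb$ exists, write $f(x_i)=\szymeqrep{w}$, and split into the four cases of the definition of $\beari$. By functionality, the positive atoms force the images of all other variables, and I then check what the negated atom demands.

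\emph{Case $i=\diam$.} Here $\korz(x_\idiam)$ forces $w\szymeq\varepsilon$. The two paths $\start\zpred$ and $\finish\zpred$ end in the common vertex $f(z_\idiam)$, so $\szymeqrep{\start\zpred}=\szymeqrep{\finish\zpred}$, i.e.\ $\start\es\szymeq\finish\es$ (here the $\zpred$ in the query is the symbol $\es$). This says $\thue$ is positive, a contradiction. This is the only case that uses negativity of $\thue$.

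\emph{Case $i=\pair{\cpred\dpred,\epred\fpred}\in\Pi_{22}$.} We get $f(y_i')=\szymeqrep{w\epred}$ and $f(z_i)=\szymeqrep{w\cpred\dpred}$. Since $w\cpred\dpred\szym w\epred\fpred$, we have $w\epred\fpred\szymeq w\cpred\dpred$. Hence $\dbmbb\models\fpred(f(y_i'),f(z_i))$, violating the negated atom.

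\emph{Case $i=\pair{\blank\ypred,\ypredp}\in\Pi_{21}$.} We get $f(z_i)=\szymeqrep{w\blank\ypred}$. Since $w\ypredp\szymeq w\blank\ypred$ by the rule, $\ypredp(f(x_i),f(z_i))$ holds, again a contradiction.

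\emph{Case $i=\pair{\ypredp,\blank\ypred}\in\Pi_{12}$.} I read the last atom as $\ypredp(x_i,z_i)$. We get $f(y_i)=\szymeqrep{w\blank}$ and $f(z_i)=\szymeqrep{w\ypredp}$. The rule, with $\Pi$ symmetric, gives $w\blank\ypred\szymeq w\ypredp$, so $\ypred(f(y_i),f(z_i))$ holds, contradicting the negated atom.

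The only delicate point is the functionality argument: every variable's image must be pinned down by a positive path from $x_i$. Checking this means verifying, for each query, that $z_i$ is reached positively, which is true in all four cases. Everything else is a direct application of the rewriting rules, so I expect no real obstacle beyond careful bookkeeping of the indices and of the typos in the query definitions.
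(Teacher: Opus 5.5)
Your proof is correct and follows the paper's argument. Assume a satisfying mapping and use that each vertex $\szymeqrep{w}$ of $\dbmbb$ has a unique $\spred$-successor $\szymeqrep{w\spred}$ to pin down the images of all variables; for $i\in\Pi$ the rewriting rule then makes the negated atom true, and for $i=\diam$ you get $\start\es\szymeq\finish\es$, contradicting negativity. The only differences are that you spell out the $\Pi_{21}$ and $\Pi_{12}$ cases the paper calls analogous, and you correctly use $\es$ in the $\diam$ case, where the paper's text has the typo $\esp$.
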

\begin{proof}
    Assume towards contradiction that $\dbmbb\models \jest{\beari}$. Let $f$ be the satisfying mapping from $\jest{\beari}$ to $\dbmbb$. We consider two cases.
    
    First, take $i \in \Pi$. Consider the case when $i$ is of the form $\pair{\cpred\dpred,\epred\fpred}$.  Let $w \in \MMM^*$ be a word such that $f(x_i) = [w]_{\szymeq}$, then $f(y'_i) = [w\epred]_{\szymeq}$ and $f(z_i) = [w\cpred\dpred]_{\szymeq}$ by definition. However $w\epred\fpred \szymeq w\cpred\dpred$ as $i \in \Pi$. Therefore $\dbmbb \models {\fpred}([w\epred]_{\szymeq},[w\cpred\dpred]_{\szymeq})$ and so $\dbmbb \models {\fpred}(f(y'_i), f(z_i))$, but $\jest{\beari} \models \neg\fpred(y'_i, z_i)$, which is  a contradiction. The other cases when $i \in \Pi$ are analogous.
    
    Second, take $i = \diam$. Note that $f(x_\diam) = a$, since by definition $Root$ predicate is satisfied only in the vertex $a$ of $\dbmbb$. Consider $f(z)$, clearly we have $[\LL\esp]_{\szymeq} = f(z) = [\RR\esp]_{\szymeq}$. Therefore $\LL\esp \szymeq \RR\esp$, which contradicts the assumption that $\thue$ is a negative instance.
\end{proof}

\subsection{Positive Part}
\begin{lemma}
If $\thue$ is a positive instance of the problem $\Thue$, then for every  $\db$ such that  $\db\models \newaboxneg$ and $\db\models \ttbox $ there exists a name $i$ such that $\dblove \models \jest{\beari}$;
\end{lemma}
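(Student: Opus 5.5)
We argue by contraposition: assume $\thue$ is positive, fix $\db$ with $\db\models\newaboxneg$ and $\db\models\ttbox$, and suppose that $\dblove\not\models\jest{\beari}$ for every $i\in\Pi$. From this we will derive $\dblove\models\jest{\bear_\idiam}$.

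The key tool is a reachability set. For a word $w\in\MMM^*$ let $R(w)\subseteq\love(\db)$ be the set of endpoints of paths in $\db$ that start at $a$ and are labelled by $w$; all such paths lie inside $\love(\db)$. Every vertex $u\in\love(\db)$ has an outgoing $\spred$-edge for each $\spred\in\MMM$. For $u=a$ this follows from $Root(a)$ and (ii) of Definition \ref{def:tbox}. For any other $u$, it follows from (i), since $u$ is reached by some $\MMM$-edge. Consequently $R(w)\neq\emptyset$ for every $w$.

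The central claim is that $R(wlv)=R(wrv)$ for all $w,v\in\MMM^*$ and every $\pair{l,r}\in\Pi$. Because $\Pi$ is symmetric, it suffices to prove $R(wlv)\subseteq R(wrv)$. Moreover, it is enough to treat the case $v=\varepsilon$: a path for $wlv$ splits at the end of its $wl$ prefix, and the $v$-suffix can be reattached. So take $z\in R(wl)$ witnessed by a path $a\xrightarrow{w}u\xrightarrow{l}z$. We split into three cases according to the kind of rule.
\begin{itemize}
\item $l=\cpred\dpred$, $r=\epred\fpred$ with $\pair{l,r}\in\Pi_{22}$. Write the path as $u\xrightarrow{\cpred}y\xrightarrow{\dpred}z$, and pick any $y'$ with $\epred(u,y')$; such a $y'$ exists by the observation above. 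Since $\bear_{\pair{l,r}}$ is not satisfied in $\dblove$ by $x\mapsto u$, $y\mapsto y$, $y'\mapsto y'$, $z\mapsto z$, the negated atom must fail, so $\fpred(y',z)\in\db$. Hence $z\in R(w\epred\fpred)$.
\item $l=\blank\ypred$, $r=\ypredp$ with $\pair{l,r}\in\Pi_{21}$. The failure of the corresponding query applied to $u\xrightarrow{\blank}y\xrightarrow{\ypred}z$ gives $\ypredp(u,z)$, so $z\in R(w\ypredp)$.
\item $l=\ypredp$, $r=\blank\ypred$ with $\pair{l,r}\in\Pi_{12}$. We have $\ypredp(u,z)$; pick any $y$ with $\blank(u,y)$. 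The failure of the corresponding query gives $\ypred(y,z)$, so $z\in R(w\blank\ypred)$.
\end{itemize}

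In all three cases the vertices involved lie in $\love(\db)$, and $\dblove$ is an induced substructure. Therefore negated atoms evaluate the same way in $\dblove$ as in $\db$, and the instantiations above are legitimate satisfying mappings whenever their negated atom holds. This justifies each step.

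By Fact \ref{fact} and induction on the length of the rewriting path, $\start\es\szymeq\finish\es$ yields $R(\start\es)=R(\finish\es)$. This set is nonempty, so pick $z$ in it. We then get $a\xrightarrow{\start}y\xrightarrow{\es}z$ and $a\xrightarrow{\finish}y'\xrightarrow{\es}z$, all inside $\love(\db)$, and $Root(a)$ holds. Hence $\dblove\models\jest{\bear_\idiam}$.

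The only delicate point is the argument that every vertex used in the case analysis, in particular $u$ itself, has all required outgoing edges. For $u=a$ this relies on the $Root$ axiom (ii), and for the remaining vertices of $\love(\db)$ on axiom (i).
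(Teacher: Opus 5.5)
Your proof is correct and follows essentially the same route as the paper. Your case analysis shows that, if no $\jest{\beari}$ with $i\in\Pi$ holds in $\dblove$, then every $l$-step out of a vertex of $\love(\db)$ can be replaced by an $r$-step; this is exactly the contrapositive of the paper's lemma that an imperfect $\dblove$ satisfies some such $\jest{\beari}$. Your invariance $R(wlv)=R(wrv)$ and the final step then correspond to Lemma \ref{lem:perfection} and its corollary that a perfect $\dblove$ satisfies $\bear_\idiam(a)$, where you additionally make explicit, via axioms (i) and (ii), the nonemptiness of $R(\start\es)$ that the paper leaves implicit.
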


Assume $\thue$ is positive, and fix a structure $\db$ such that $\db\models \newaboxneg$ and $\db\models \ttbox$. Consider $\dblove$, and recall that the set of vertices of $\dblove$ is denoted with $\vlove$.

We introduce a useful notation for reachability in $\dblove$.
\begin{definition}
For two vertices $s,t$ of $\vlove$ and for a symbol $\spred \in \MMM$  we write $\strz{s}{\spred}{t}$ if $\dblove \models \spred(s,t)$, and for a word $w\in \MMM^*$ we write $\strz{s}{w\spred}{t}$ if there exists $u\in \vlove$ such that $\strz{s}{w}{u}$ and $\strz{u}{\spred}{t}$.
\end{definition}

\begin{definition}[Perfect Structure]
    We say $\dblove$ is \emph{perfect} if\\ $\;\;\strz{s}{l}{t}\;\; {\iffi}  \;\;\strz{s}{r}{t}$, for all $s,t \in \vlove$, and $\pair{l,r} \in \Pi$.
\end{definition}

The following lemma is natural:

\begin{lemma}\label{lem:perfection} 
Take  $s\in \vlove$ and $w,v\in \MMM^*$, if $\dblove$ is perfect and
 $\strz{a}{w}{s} \land w \szymeq v $  then  $\strz{a}{v}{s}$.
\end{lemma}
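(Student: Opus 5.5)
By Fact \ref{fact}, $w\szymeq v$ means there is a finite chain $w=w_0\szym w_1\szym\cdots\szym w_n=v$. The plan is an induction on the length $n$ of this chain. It is then enough to prove the single-step claim: if $\strz{a}{w}{s}$ and $w\szym v$, then $\strz{a}{v}{s}$. The case $n=0$ is trivial, and the inductive step just chains single steps, with $s$ fixed throughout.

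For a single step, write $w=xly$ and $v=xry$ with $\pair{l,r}\in\Pi$ and $x,y\in\MMM^*$. First I would record a decomposition property of the relation $\strz{\cdot}{\cdot}{\cdot}$, proved by a routine induction on the length of the second word directly from the definition. It says that $\strz{s}{uu'}{t}$ holds iff there is $m\in\vlove$ with $\strz{s}{u}{m}$ and $\strz{m}{u'}{t}$ (with the convention $\strz{s}{\varepsilon}{t}$ iff $s=t$). Applying this twice to $\strz{a}{xly}{s}$ gives vertices $m_1,m_2$ with
\[
\strz{a}{x}{m_1},\qquad \strz{m_1}{l}{m_2},\qquad \strz{m_2}{y}{s}.
\]
All intermediate vertices are in $\vlove$, since $\vlove$ is closed under $\MMM$-successors and contains $a$. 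This is why the relation is stated inside $\dblove$.

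Perfection of $\dblove$, applied to $m_1,m_2\in\vlove$ and $\pair{l,r}\in\Pi$, now turns $\strz{m_1}{l}{m_2}$ into $\strz{m_1}{r}{m_2}$. Recomposing with the same decomposition property yields $\strz{a}{xry}{s}$, i.e.\ $\strz{a}{v}{s}$. Symmetry of $\Pi$ (equivalently, the ``iff'' in the definition of perfect) covers rewriting in either direction, so no separate backward case is needed.

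The only step needing any care is the decomposition lemma. The definition of $\strz{s}{w\spred}{t}$ only peels off the last letter, so concatenation has to be established by induction on $|u'|$. One also has to fix the empty-word convention, since the rules in $\Pi_{21}$ and $\Pi_{12}$ change the word length and $x$ or $y$ may be empty. Everything else is bookkeeping.
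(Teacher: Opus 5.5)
Your proposal is correct and follows the paper's own proof, which is simply ``by Fact~\ref{fact} and induction on the number of rewriting steps.'' You spell out the single-step case that the paper leaves implicit: splitting the path at the rewritten factor, applying perfection to the middle segment, and recomposing, with the concatenation property and the empty-word convention made explicit.
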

\begin{proof}
By Fact \ref{fact} and induction with respect to the number of rewriting steps needed to produce $v$ from $w$.
\end{proof}

\begin{corollary}
    If $\dblove$ is perfect then $\dblove \models \bear_\diam(a).$
\end{corollary}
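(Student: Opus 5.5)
We have to exhibit a satisfying mapping for $\bear_\diam$ in $\dblove$ that sends $x_\diam$ to $a$. Recall that
\[
\bear_\diam(x_\diam)=\exists y_\diam, y'_\diam, z_\diam\; Root(x_\diam), \start(x_\diam,y_\diam), \zpred(y_\diam,z_\diam), \finish(x_\diam,y'_\diam), \zpred(y'_\diam,z_\diam),
\]
where $\zpred$ stands for the end-of-tape symbol $\es$ (as in the definition of a positive instance). The query contains no negated atoms. Hence it suffices to find a homomorphism from its positive part into $\dblove$, and no negativity constraints need checking.

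The plan proceeds in four steps.

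\textbf{Step 1.} Since $\db\models\newaboxneg$, we have $Root(a)\in\db$. Because $a\in\love(\db)$ and $\dblove$ is the induced substructure on $\love(\db)$, also $Root(a)\in\dblove$. We set $x_\diam\mapsto a$.

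\textbf{Step 2.} We produce a path labelled $\start\es$ out of $a$. By (h2) of $\ttbox$, the vertex $a$ has an outgoing $\start$-edge to some $u$. By (h1), $u$ has an outgoing $\es$-edge to some $s$. Both $u$ and $s$ lie in $\love(\db)$ by the closure definition of $\love(\db)$. Since the structure is induced, both edges belong to $\dblove$. Thus $\strz{a}{\start\es}{s}$.

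\textbf{Step 3.} Since $\thue$ is positive, $\start\es\szymeq\finish\es$. Since $\dblove$ is perfect, Lemma \ref{lem:perfection} with $w=\start\es$ and $v=\finish\es$ gives $\strz{a}{\finish\es}{s}$. Unfolding the definition of $\strz{\cdot}{\cdot}{\cdot}$, this yields a vertex $u'\in\vlove$ with $\dblove\models\finish(a,u')$ and $\dblove\models\es(u',s)$.

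\textbf{Step 4.} We define the mapping $y_\diam\mapsto u$, $y'_\diam\mapsto u'$, $z_\diam\mapsto s$. It satisfies every atom of the query, so $\dblove\models\bear_\diam(a)$.

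There is no real obstacle here. The only points needing care are two. First, the existence of the initial path must be derived from $\ttbox$ (via (h2) and then (h1)). Second, the reachability notation, defined by induction on word length, must be unfolded correctly so that the final letter gives the last edge. The substantive work is already contained in Lemma \ref{lem:perfection}.
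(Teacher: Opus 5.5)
Your proof is correct and follows the route the paper intends. The corollary is stated without proof right after Lemma~\ref{lem:perfection}, and the implicit argument is exactly yours: get an $\start\es$-path out of $a$ from (h2) and (h1), then use positivity ($\start\es\szymeq\finish\es$) and the lemma to obtain an $\finish\es$-path from $a$ ending at the same vertex. You also correctly supply details the paper leaves unstated: the initial path comes from $\ttbox$ and lies inside the induced substructure $\dblove$, and $\zpred$ is the same symbol as $\es$.
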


\begin{lemma}
    \!If $\dblove$\! is imperfect then $\dblove \!\models\! \jest{\beari}$ for some \small $i \!\in\! \III \setminus \set{\!\diam\!}$.
\end{lemma}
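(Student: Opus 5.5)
The plan is to unfold the definition of imperfection into a concrete witness and then read off a satisfying mapping for the matching bear query $\bear_i$ with $i=\pair{l,r}$. If $\dblove$ is imperfect, there are $s,t\in\vlove$ and $\pair{l,r}\in\Pi$ such that $\strz{s}{l}{t}$ holds but $\strz{s}{r}{t}$ fails. Because $\Pi$ is symmetric, I do not need to worry about which side of the rule is the one that is realized: whichever pair $\pair{l,r}$ the imperfection gives us, $\pair{l,r}$ is itself a name in $\III\setminus\set{\diam}$. So it is enough to show $\dblove\models\bear_{\pair{l,r}}(s)$.

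First I will record a preliminary fact. Every vertex $u\in\vlove$ has an outgoing $\spred$-edge for every $\spred\in\MMM$, and that edge stays inside $\vlove$. Indeed, $a$ satisfies $Root$, so (h2) applies to it; every other vertex of $\vlove$ is reached by some $\MMM$-edge, so (h1) applies to it. Successors of vertices of $\vlove$ along $\MMM$-edges lie in $\vlove$ by definition. I will also use that $\dblove$ is an induced substructure of $\db$. Hence a negated atom on vertices of $\vlove$ holds in $\dblove$ exactly when the corresponding positive fact is absent from $\db$, and absence can be read off directly from the relation $\strz{}{}{}$.

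Then I will split into three cases according to the kind of rule $\pair{l,r}$ is.

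\emph{Case $\pair{l,r}=\pair{\cpred\dpred,\epred\fpred}\in\Pi_{22}$.} From $\strz{s}{\cpred\dpred}{t}$ we get $y$ with $\cpred(s,y)$ and $\dpred(y,t)$. By the preliminary fact, choose any $y'$ with $\epred(s,y')$. If $\fpred(y',t)$ held, then $\strz{s}{\epred\fpred}{t}$ would hold, which is excluded. So the map $x_i\mapsto s$, $y_i\mapsto y$, $y_i'\mapsto y'$, $z_i\mapsto t$ satisfies $\bear_i$.

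\emph{Case $\pair{l,r}=\pair{\blank\ypred,\ypredp}\in\Pi_{21}$.} From $\strz{s}{\blank\ypred}{t}$ we get $y$ with $\blank(s,y)$ and $\ypred(y,t)$, and $\ypredp(s,t)$ fails by assumption. So the map $x_i\mapsto s$, $y_i\mapsto y$, $z_i\mapsto t$ works directly.

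\emph{Case $\pair{l,r}=\pair{\ypredp,\blank\ypred}\in\Pi_{12}$.} Here $\ypredp(s,t)$ holds. By the preliminary fact, choose any $y$ with $\blank(s,y)$. If $\ypred(y,t)$ held, we would get $\strz{s}{\blank\ypred}{t}$, which is excluded. So the map $x_i\mapsto s$, $y_i\mapsto y$, $z_i\mapsto t$ satisfies $\bear_i$. In that query the positive atom should be read as $\ypredp(x_i,z_i)$.

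The only delicate step is the existence of the auxiliary successor ($y'$ in the first case, $y$ in the third) inside $\vlove$. This is exactly where the TBox axioms (h1) and (h2) are used, so the preliminary fact must be established before the case analysis. The rest is direct bookkeeping.
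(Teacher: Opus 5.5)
Your proposal is correct and follows essentially the same route as the paper. You take a witness $s,t$ and $\pair{l,r}\in\Pi$ with $\strz{s}{l}{t}$ but not $\strz{s}{r}{t}$ (w.l.o.g.\ by symmetry of $\Pi$), use the TBox to supply the auxiliary successor when $|r|=2$, and use the failure of $\strz{s}{r}{t}$ for the negated atom. Your split into $\Pi_{22}$, $\Pi_{21}$, $\Pi_{12}$ is the paper's split by $|r|\in\{1,2\}$ made explicit. You are also more careful than the paper in checking that the chosen successor lies in $\vlove$, and that negated atoms transfer correctly because $\dblove$ is induced.
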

\begin{proof}
    Take some $s,t \in \vlove$ and $\rho = \pair{l,r} \in \Pi$ violating perfection of $\dblove$. Assume w.l.o.g that $\strz{s}{l}{t}$ but not $\strz{s}{r}{t}$. Note that $r = \spred$ or $r = \ppred\spred$ for some symbols $\spred,\ppred \in \MMM$ by definition --- that is $r$ is of length 1 or 2. If $r = \spred$ then $\dblove \not\models \\spred(s,t)$ by assumption, thus trivially $\dblove \models \bear_\rho(s)$. If $r = \ppred \spred$ then note that there exists a vertex $u$ such that $\dblove \models \\ppred(s,u)$ as $\dblove$ is a model of $\ttbox$. However $\dblove \not\models \\spred(u,t)$, by assumption, thus $\dblove \models \bear_\rho(s)$.
\end{proof}

%\newpage
%\input{X1-Appendix}

\end{document}